\documentclass{article}
\usepackage{graphicx}
\usepackage{latexsym}
\usepackage{amssymb}
\usepackage{amsmath}
\usepackage{amsthm}
\usepackage{thmtools}
\usepackage{forloop}
\usepackage[paper=letterpaper,margin=1in]{geometry}
\usepackage[ruled,vlined,linesnumbered]{algorithm2e}
\usepackage{paralist}
\usepackage{xcolor}
\usepackage{scalerel}
\usepackage{pgfplots}
\pgfplotsset{compat=1.18}
\usepackage{hyperref}

\newtheorem{theorem}{Theorem}[section]
\newtheorem{lemma}[theorem]{Lemma}

\newtheorem{corollary}[theorem]{Corollary}
\newtheorem{definition}{Definition}[section]

\newtheorem{observation}[theorem]{Observation}

\newcommand{\defcal}[1]{\expandafter\newcommand\csname c#1\endcsname{{\mathcal{#1}}}}
\newcommand{\defbb}[1]{\expandafter\newcommand\csname b#1\endcsname{{\mathbb{#1}}}}
\newcommand{\defvec}[1]{\expandafter\newcommand\csname v#1\endcsname{{\mathbf{#1}}}}
\newcounter{calBbCounter}
\forLoop{1}{26}{calBbCounter}{
    \edef\letter{\alph{calBbCounter}}
		\edef\Letter{\Alph{calBbCounter}}
    \expandafter\defcal\Letter
		\expandafter\defbb\Letter
		\expandafter\defvec\letter
}

\newcommand{\eps}{\varepsilon}
\newcommand{\nnR}{{\bR_{\geq 0}}}

\newcommand{\characteristic}{{\mathbf{1}}}

\newcommand{\RSet}{{\texttt{R}}}
\newcommand{\email}[1]{{\href{mailto:#1}{#1}}}

\title{Stronger Hardness for Submodular Maximization Subject to a Matroid Constraint}

\author{Moran Feldman\thanks{Computer Science Department, University of Haifa, Israel. Part of this work was done while the author was visiting Queen Mary University of London. E-mail: \email{moranfe@cs.haifa.ac.il}} \and Alan Kuhnle\thanks{Department of Computer Science \& Engineering, Texas A\&M University. E-mail: \email{kuhnle@tamu.edu}}}

\begin{document}

\maketitle

\begin{abstract}
Maximizing a submodular function subject to a matroid constraint is a cornerstone problem in combinatorial optimization. The state-of-the-art algorithm for this problem obtains $0.401$-approximation~\cite{buchbinder2024constrained}, and the state-of-the-art hardness result shows that no polynomial time algorithm can obtain better than $0.478$-approximation for this problem~\cite{oveisgharan2011submodular}. In this work, we present the first improvement in $15$ years for the hardness result, showing that no polynomial time algorithm can obtain better than $8/17 \approx 0.471$-approximation, even for the special case of a cardinality or (simplified) partition matroid constraint.

\medskip

\textbf{Keywords:} submodular maximization, matroid constraint, hardness, symmetry gap

\end{abstract}

\section{Introduction}

Maximization of a submodular function subject to a matroid constraint is a cornerstone problem in combinatorial optimization, important both for its many applications and for the valuable techniques developed as part of the study of this problem. An important special case is the maximization of a \emph{monotone} submodular function subject to a matroid constraint. Already in $1978$, Fisher et al.~\cite{fisher1978analysis} showed that the greedy algorithm guarantees $1/2$-approximation for this special case, and Nemhauser et al.~\cite{nemhauser1978analysis} showed that the approximation ratio of the greedy algorithm improves to $1-1/e \approx 0.632$ when the matroid constraint happens to be a cardinality constraint. The last approximation ratio was shown to be tight by Nemhauser \& Wolsey~\cite{nemhauser1978best}, but this left a gap for general matroid constraints between the approximation ratio of $1/2$ due to~\cite{fisher1978analysis} and the hardness of $1-1/e$ due to~\cite{nemhauser1978best}. This gap was finally closed roughly $30$ years later by Calinescu et al.~\cite{calinescu2011maximizing}, who described the celebrated Continuous Greedy algorithm, which guarantees $(1-1/e)$-approximation for maximizing a monotone submodular function subject to a general matroid constraint.

Understanding the maximization of general (not necessarily monotone) submodular functions subject to a matroid constraint turned out to be more difficult. The first algorithm for this task was presented in $2010$ by Lee et al.~\cite{lee2010maximizing} and guaranteed $1/4$-approximation. This result was improved in a long series of works~\cite{buchbinder2019constrained,chekuri2014submodular,ene2016constrained, feldman2011unified,oveisgharan2011submodular,vondrak2013symmetry}, and the current state-of-the-art is a recent $0.401$-approximation algorithm due to Buchbinder \& Feldman~\cite{buchbinder2024constrained}. A few inapproximability results are also known for the problem. The $1-1/e$ inapproximability of Nemhauser \& Wolsey~\cite{nemhauser1978best} for the monotone case naturally applies also to general submodular functions, and this was improved by Feige et al.~\cite{feige2011maximizing}, who showed a hardness of $1/2$ for maximizing a general submodular function even in the absence of a constraint (which is a special case of a matroid constraint). Later, Vondr{\'{a}}k~\cite{vondrak2013symmetry} described a new framework for proving inapproximability results for submodular maximization problems, which he named \emph{the symmetry gap} framework. Oveis Gharan \& Vondr{\'{a}}k~\cite{oveisgharan2011submodular} used the symmetry gap framework to prove an inapproximability of $2e^{-1/2}(1 - e^{-1/2}) \approx 0.477$ for maximizing a general submodular function subject to a matroid constraint.\footnote{Oveis Gharan \& Vondr{\'{a}}k~\cite{oveisgharan2011submodular} states the hardness as the numerical approximation $0.478$. The exact expression was determined by~\cite{feldman2025bicriteria}.} The hardness of~\cite{oveisgharan2011submodular} applies even when the matroid constraint is restricted be a partition matroid constraint, and Qi~\cite{qi2024maximizing} described a variant of it that works even when the matroid constraint is restricted to be a cardinality constraint.

Despite many attempts to improve over it, the hardness of Oveis Gharan \& Vondr{\'{a}}k~\cite{oveisgharan2011submodular} remained the state-of-the-art for $15$ years. Feldman \& Kuhnle~\cite{feldman2025bicriteria} even showed that the Measured Continuous Greedy algorithm of~\cite{feldman2011unified} can guarantee a matching approximation ratio for a large family of instances; which was taken as an evidence for why the hardness of Oveis Gharan \& Vondr{\'{a}}k~\cite{oveisgharan2011submodular} is likely to be tight. Nevertheless, in this work we show that this is not the case, and prove the following theorem.

\begin{theorem} \label{thm:hardness}
For every constant $\eps > 0$, no sub-exponential time algorithm can guarantee $(8/17 + \eps)$-approxi\-mation for the problem of maximizing a non-negative submodular function subject to a matroid constraint. Furthermore, this is the case even if the matroid constraint is either restricted to be a (simplified) partition matroid or restricted to be a cardinality constraint.
\end{theorem}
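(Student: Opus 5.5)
We will use Vondr\'{a}k's symmetry gap framework~\cite{vondrak2013symmetry}. Its main theorem says the following. Take an instance $\max\{f(S) : S \in \cF\}$ that is strongly symmetric under a group $\cG$ of permutations of the ground set. Let $\text{OPT}$ be its optimum and let $\overline{\text{OPT}} = \max\{F(\bar{\vx}) : \vx \in P(\cF)\}$, where $\bar{\vx}$ denotes the symmetrization of $\vx$ under $\cG$. Then, for refinements of the instance (blowing each element up into many copies), no algorithm using a sub-exponential number of value queries achieves an approximation better than $\overline{\text{OPT}}/\text{OPT} + \eps$. Moreover, if $\cF$ is a partition or cardinality constraint, the refined constraint is of the same type. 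Hence the whole task is to design one small symmetric instance with a matroid constraint (concretely, a partition matroid) whose symmetry gap is exactly $8/17$. After that we check that it can also be phrased with a cardinality constraint, along the lines of Qi's~\cite{qi2024maximizing} adaptation of~\cite{oveisgharan2011submodular}.

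\textbf{Step 1: the instance.} Oveis Gharan \& Vondr\'{a}k used a directed-cut-type function on a star-like gadget. There, the $0.477$ bound comes from optimizing $2x(1-x)$-type expressions, which produce $e^{-1/2}$ terms. I would generalize this gadget. Take $k$ "heads" $a_1,\dots,a_k$ and $k$ "tails" $b_1,\dots,b_k$, under the partition matroid that allows at most one head, with tails unconstrained (or one tail per part). Define $f$ as a non-negative combination of weighted directed cut functions, together with a few coverage-style (hence submodular) terms chosen so that the resulting expression is rational. The symmetry group permutes the indices $i$ simultaneously. The integral optimum picks $a_1$ together with all $b_j$, $j\neq 1$, which gives $\text{OPT}$ a value computable in closed form. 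The symmetrized problem restricts attention to $x_{a_i}=p/k$ and $x_{b_i}=q$. As $k\to\infty$, $\bar F$ becomes a polynomial-exponential expression in $(p,q)$ that we maximize over $p\le 1$, $q\in[0,1]$. I would tune the weights, for example a weight $w$ on the head-to-tail arcs versus weight $1$ on the tails' own contribution, so that the exponential terms cancel. The aim is a symmetric optimum at a rational point with ratio $8/17$. This is the main obstacle. I would first try a family with a single free weight parameter, write the symmetric optimum as a function of it, and minimize the ratio analytically.

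\textbf{Step 2: verifying the gap.} We verify three things. First, $f$ is non-negative and submodular, which holds because every term is a cut or coverage function. Second, the instance is strongly symmetric, in the sense that $\cF$ is invariant under $\cG$ and $f(\sigma(S))=f(S)$ for every $\sigma\in\cG$. Third, $\overline{\text{OPT}}/\text{OPT}\to 8/17$. For the third point, we bound $\max \bar F$ via the KKT conditions on the two-variable concave-in-each-coordinate expression and check the boundary cases $p\in\{0,1\}$ and $q\in\{0,1\}$. Any finite-$k$ error is absorbed into $\eps$.

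\textbf{Step 3: cardinality version.} We replace the constraint "at most one head, arbitrary tails" by a cardinality bound $|S|\le m$ on an instance in which the tails are also duplicated. The point is to arrange that including extra tails beyond the budget never helps, for example by attaching to the tails a negative-slope-free penalty that is realized through the cut structure. We then recheck that the integral optimum is unchanged and that the symmetric optimum still yields $8/17$, as in Qi's reduction. Applying the symmetry gap theorem to both versions gives Theorem~\ref{thm:hardness}.
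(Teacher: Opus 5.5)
Your proposal is a plan rather than a proof, and the missing part is the heart of the theorem. Theorem~\ref{thm:hardness} rests on one concrete strongly symmetric instance whose symmetry gap is at most $8/17 + o(1)$. You leave exactly this as ``the main obstacle,'' with only an intention to add unspecified coverage terms and tune a weight until exponentials cancel. Nothing in your head/tail family is shown to give $8/17$, and it lacks the two ingredients that make the paper's instance work. The first is a third layer: for each $i$ a pair $c_{i,1}, c_{i,2}$, with $f = \sum_i [\ell(1-w_1-w_2)\, g_{\{a_i\},\{b_i\}} + w_1\, g_{\{a_i,b_i\},\{c_{i,1},c_{i,2}\}} + w_2 (g_{\{c_{i,1}\},\{c_{i,2}\}} + g_{\{c_{i,2}\},\{c_{i,1}\}})]$, where $g_{X,Y}$ is the directed-hyperedge cut function. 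The second is a group that, besides permuting indices, swaps $c_{i,1} \leftrightarrow c_{i,2}$. This forces $\bar{x}_{c_{i,1}} = \bar{x}_{c_{i,2}} = c$, so a symmetric point earns only $2c(1-c)$ from the $C$-edges and $1-c^2$ from the head of the hyperedge. In contrast, $O = \{a_1, c_{1,1}\} \cup \{b_i, c_{i,1} : i \geq 2\}$ cuts the heavy edge $a_1 \to b_1$, every hyperedge and one $C$-edge per $i$, for value $\ell$. Using $a\ell \le 1$, the symmetric value is bounded by an expression linear in $b$. Checking $b \in \{0,1\}$ then gives $\ell \max\{1 - w_1 - w_2/2, (w_1+w_2)^2/(w_1+2w_2)\} + 1$, which equals $8\ell/17 + 1$ at $w_1 = w_2 = 6/17$. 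No exponentials arise at all, because every term contains a single $a_i$ and the head-edge weight is scaled by $\ell$. So ``cancelling exponentials'' is not the mechanism.

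Your treatment of the special classes also does not go through as stated.
\begin{itemize}
\item \emph{Simplified partition matroids.} A standard refinement of $\{S : |S \cap A| \le 1\}$ is the partition matroid with capacity $|X|$ on $A \times X$, which is not simplified. The paper gets the simplified case by replacing refinements with \emph{parallel} refinements, in which each copy of $N$ must be feasible on its own. It then checks that these preserve symmetry and the fractional optimum, the two properties Vondr\'{a}k's proof relies on.
\item \emph{Cardinality constraints.} Duplicating the tails goes the wrong way. The budget must admit the optimal set, which already has at least as many elements as there are heads. A symmetric fractional point can therefore take every head, and the ``at most one head'' restriction vanishes; a penalty on tails does not address this. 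The class that must be blown up is the constrained one. The paper splits each $a_i$ into $\ell^2$ copies, sets $f'(S) = F(\xi(S))$ (submodular by a coupling argument plus multilinearity), and uses budget $\ell^2 + 3\ell$. Then the $A$-mass of any feasible point is at most $1 + 3/\ell$, which costs only a factor $1 + 3/\ell$ via concavity of $F$ along non-negative directions.
\end{itemize}
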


We note that $8/17$ is approximately $0.471$. Furthermore, the assumption of non-negativity in Theorem~\ref{thm:hardness} is standard, and is done in all the above stated works. Without this assumption, one cannot obtain any constant multiplicative guarantee for the problem.

\paragraph{Remark.} After the appearance of the first version of this note on arXiv, Siyuan Chen brought to our attention that a very recent concurrent work of Vaneet Aggarwal~\cite{aggarwal2026geometry} proves a hardness of 0.470438681 for a related setting of DR-submodular function maximization. Siyuan Chen observed that this hardness result of Vaneet Aggarwal can be transferred to the problems of maximization of a non-negative submodular set function over a cardinality or matroid constraint, and thus, slightly improves over the hardness of 8/17 proved by Theorem~\ref{thm:hardness} for these constraints.

\paragraph{Paper Structure.} Section~\ref{sec:preliminaries} gives the necessary technical background, including the definitions necessary for the formal statement of the problem we consider and the details of the symmetry gap framework. Then, the hardness stated by Theorem~\ref{thm:hardness} for general matroid constraints is proved in Section~\ref{sec:symmetric_instance}. Section~\ref{sec:specific_classes} shows how to get the same hardness for more specific classes of matroid constraints.

\section{Preliminaries} \label{sec:preliminaries}

In this section, we present known definitions and results. Specifically, in Section~\ref{ssc:problem_definition}, we give the definitions necessary for formally stating the problem we consider. Then, Section~\ref{ssc:symmetry_gap} formally presents the symmetry gap framework. We conclude with Section~\ref{ssc:additional_notation}, which states some additional notation that we use.

\subsection{Problem Related Definitions} \label{ssc:problem_definition}

\paragraph{Set Functions.}

A \emph{set function} is a function $f\colon 2^N \to \bR$ assigning a numerical value to every subset of some ground set $N$. We say that $f$ is \emph{non-negative} if $f(S) \geq 0$ for every $S \subseteq N$. We say that $f$ is \emph{monotone} if $f(S) \leq f(T)$ for every two sets $S \subseteq T \subseteq N$. Finally, we say that $f$ is \emph{submodular} if $f(S) + f(T) \geq f(S \cup T) + f(S \cap T)$ for every two sets $S, T \subseteq N$. It follows from this definition that submodular functions are closed under addition and multiplication by a non-negative scalar. In other words, if $f(S)$ and $g(S)$ are two submodular functions and $\alpha \geq 0$ is a scalar, then the function $\alpha \cdot f(S) + g(S)$ is also submodular. An equivalent definition of submodularity is that $f$ is submodular if $f(S \cup \{u\}) - f(S) \geq f(T \cup \{u\}) - f(T)$ for every two sets $S \subseteq T \subseteq N$ and element $u \in N \setminus T$.

The multilinear extension of a set function $f \colon 2^N \to \bR$ is a function $F\colon [0, 1]^N \to \bR$ defined by
\[
    F(\vx)
    \triangleq
    \sum_{S \subseteq N} \bigg[f(S) \cdot \prod_{u \in S} x_u \cdot \mspace{-9mu} \prod_{u \in N \setminus S} \mspace{-9mu} (1 - x_u)\bigg]
    =
    \bE[f(\RSet(\vx))]
    \qquad
    \forall\; \vx \in [0, 1]^N
    \enspace,
\]
where $\RSet(\vx)$ is a random set independently containing every element $u \in N$ with probability $x_u$.

\paragraph{Matroids.}

A matroid $\cM$ is defined as a pair $(N, \cI)$, where $N$ is a ground set and $\cI$ is a non-empty collection of subsets of $N$ obeying two properties.
\begin{itemize}
    \item Down-closedness: For every two sets $S \subseteq T \subseteq N$, if $T \in \cI$ then $S \in \cI$.
    \item Exchange: For every two sets $S, T \in \cI$, if $|S| < |T|$, then there exists an element $u \in T \setminus S$ such that $S \cup \{u\} \in \cI$.
\end{itemize}
Following linear algebra motivations, it is customary to refer to the sets of $\cI$ as the \emph{independent} sets of $\cM$. Under the corresponding matroid constraint, a set is feasible if it is independent in $\cM$. 

A (generalized) \emph{partition matroid} $\cM$ is defined by a partition of the ground set $N$ into disjoint sets $N_1, N_2, \dotsc, N_h$ and non-negative integers $k_1, k_2, \dotsc, k_h$. A set $S \subseteq N$ is independent in $\cM$ if $|N_i \cap S| \leq k_i$ for every $i \in [h]$. One can verify that partition matroids are indeed matroids. Furthermore, $\cM$ is called a \emph{simplified partition matroid} if $k_i = 1$ for all $i \in [h]$, and it is called a \emph{uniform matroid} if there is only a single part in the partition (i.e., $h = 1$). The matroid constraint corresponding to a uniform matroid allows any subset of $N$ of size at most $k$ for some value $k$. Such constraints are also known as \emph{cardinality constraints}.

\paragraph{Oracles.}

We are interested in the problem of maximizing a submodular function $f \colon 2^N \to \bR$ subject to a matroid constraint $\cM = (N, \cI)$. Algorithms for this problem are interesting when they run in time that is polynomial in the size of the ground set $N$, and therefore, it is important to make the formal input size polynomial in $|N|$. Unfortunately, the description of $f$ and $\cI$ might be exponential in $|N|$. Thus, it is customary to assume that $f$ is not part of the formal input to the algorithm, and instead the algorithm can access it through a \emph{value oracle}. Given a set $S \subseteq N$, the value oracle returns $f(S)$.

The situation regarding $\cI$ is somewhat more involved. When $\cM$ is a uniform or partition matroid, $\cI$ has a compact representation, and thus, is considered part of the formal input. However, when $\cM$ is a general matroid constraint, it is customary to assume that the algorithm has access to $\cI$ only through an \emph{independence oracle} that given a set $S \subseteq N$ answers whether $S \in \cI$.

\subsection{Symmetry Gap} \label{ssc:symmetry_gap}

The main result of the symmetry gap framework of~\cite{vondrak2013symmetry} is given by Theorem~\ref{thm:symmetry_gap} below. However, before getting to this theorem, we first need to state some definitions from~\cite{vondrak2013symmetry}. The first definition states what it means for a problem instance to be symmetric. In all the definitions of this section, a constraint $\cF$ is formally modeled as the collection of all the sets that are feasible under it.

\begin{definition}[Strong symmetry] \label{def:strong_symmetry}
Consider a non-negative submodular function $f\colon 2^N \to \nnR$ and a collection $\cF \subseteq 2^N$ of feasible sets. The problem $\max\{f(S) \mid S \in \cF\}$ is strongly symmetric with respect to a group of permutations $\cG$ on $N$, if (1) $f(S) = f(\sigma(S))$ for all $S \subseteq N$ and $\sigma \in \cG$, and (2) $S \in \cF \iff S' \in \cF$ whenever $\bE_{\sigma \in \cG} [\characteristic_{\sigma(S)}] = \bE_{\sigma \in \cG} [\characteristic_{\sigma(S')}]$, where $\bE_{\sigma \in \cG}$ represents the expectation over picking $\sigma$ uniformly at random out of $\cG$.
\end{definition}

The symmetry gap of a symmetrical instance is the ratio between the optimal (possibly fractional) symmetric solution for the instance and the true (possibly fractional) optimal solution. Following is a formal definition of the symmetry gap.
\begin{definition}[Symmetry gap]
Consider a non-negative submodular function $f\colon 2^N \to \nnR$ and a collection $\cF \subseteq 2^N$ of feasible sets. Let $F(x)$ be the multilinear extension of $f$ and $P(\cF) \subseteq [0, 1]^N$ be the convex hull of $\{\characteristic_S \mid S \in \cF\}$. Then, if the problem $\max\{f(S) \mid S \in \cF\}$ is strongly symmetric with respect to a group $\cG$ of permutation, then its symmetry gap is defined as
\[
	\frac{\max\{F(\bar{\vx}) \mid \vx \in P(\cF)\}}{\max\{F(\vx) \mid \vx \in P(\cF)\}}
	\enspace,
\]
where $\bar{\vx} \triangleq \bE_{\sigma \in \cG}[\sigma(\vx)]$.
\end{definition}

The symmetry gap framework shows that the existence of a symmetrical instance with a constraint $\cF$ translates into a hardness for the family of constraints that can be obtained as refinements of $\cF$ (in some sense). Following is the formal definition of these refinements.

\begin{definition}[Refinement] \label{def:refinement}
Consider a set $\cF \subseteq 2^N$, and let $X$ be some set. We say that $\tilde{\cF} \subseteq 2^{N \times X}$ is a refinement
of $\cF$ if
\[
 \tilde{F}
	=
	\Big\{S \subseteq N \times X ~\Big|~ \text{$\vx \in P(\cF)$, where $x_u = \tfrac{|S \cap (\{u\} \times X)|}{|X|}$ for every $u \in N$} \Big\}
	\enspace.
\]
\end{definition}

We are now ready to state the main result of the symmetry gap framework.

\begin{theorem}[Theorem~1.8 of~\cite{vondrak2013symmetry}] \label{thm:symmetry_gap}
Consider a non-negative submodular function $f\colon 2^N \to \nnR$ and a collection $\cF \subseteq 2^N$ of feasible sets such that the problem $\max\{f(S) \mid S \in \cF\}$ is strongly symmetric with respect to a group $\cG$ of permutations and has a symmetry gap $\gamma$. Let $\cC$ be the class of the problems $\max\{ \tilde{f}(S) \mid S \in \tilde{\cF}\}$ in which $\tilde{f}$ is a non-negative submodular function and $\tilde{\cF}$ is a
refinement of $\cF$. Then, for every constant $\eps > 0$, any (even randomized) $(1 + \eps)\gamma$-approximation algorithm for the class $\cC$ would require exponentially many value oracle queries to $\tilde{f}(S)$. If $f$ is monotone, then this hardness result holds even if the class $\cC$ is restricted to problems with a monotone objective function $\tilde{f}$.
\end{theorem}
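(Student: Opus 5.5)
The plan is to follow Vondr{\'{a}}k's indistinguishability argument: build two families of refined instances, one whose optimum is close to $\max\{F(\vx) \mid \vx \in P(\cF)\}$ and one whose optimum is close to $\max\{F(\bar{\vx}) \mid \vx \in P(\cF)\}$, and show that no algorithm making sub-exponentially many value queries can tell them apart. I will work with continuous functions on $P(\cF)$ and move to set functions on $N \times X$ at the end. Fix $\eps > 0$ and let $F$ be the multilinear extension of $f$. For a feasible set $S \subseteq N \times X$, write $\vx(S) \in [0,1]^N$ for the vector with coordinates $x_u = |S \cap (\{u\} \times X)|/|X|$.

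\textbf{Step 1: two continuous functions.} Define the symmetrized function $G(\vx) \triangleq F(\bar{\vx})$. I then build $\hat{F}$ and $\hat{G}$ with the following properties:
\begin{compactenum}
\item[(a)] $\hat{F}(\vx) = \hat{G}(\vx)$ whenever $\|\vx - \bar{\vx}\|$ is at most some small $\delta$;
\item[(b)] $\hat{F}$ is within $\eps$ of $F$ and $\hat{G}$ is within $\eps$ of $G$, up to normalization by $\max F$;
\item[(c)] both keep the second-order properties of multilinear extensions of submodular functions, namely $\partial^2/\partial x_u \partial x_v \le 0$, and monotonicity when $f$ is monotone.
\end{compactenum}
The construction interpolates $\hat{F}(\vx) = F(\tilde{\vx}) + 256M|X|\,\phi(D(\vx))$, where $\tilde{\vx}$ is a blend of $\vx$ and $\bar{\vx}$ controlled by $D(\vx) = \|\vx - \bar{\vx}\|^2$, $\phi$ is a suitable smooth concave correction, and $M$ is a bound on the second derivatives of $F$. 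Strong symmetry (condition (1) of Definition~\ref{def:strong_symmetry}) guarantees that $G$ is itself the multilinear extension of a submodular function, so the interpolation stays in the right class. \textbf{This step is the main obstacle.} The concern is that gluing $F$ to $G$ near the symmetric subspace may create positive mixed second derivatives. I would first control the error terms using the bounded Hessian of the multilinear polynomial $F$, then add a concave penalty in $D(\vx)$ large enough to dominate them.

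\textbf{Step 2: discretize to set functions.} Set $\tilde{f}(S) = \hat{F}(\vx(S))$ and $\tilde{g}(S) = \hat{G}(\vx(S))$. A function of the counts $|S \cap (\{u\} \times X)|$ whose continuous extension has nonpositive mixed partials and is smooth is submodular on $N \times X$. Non-negativity holds up to an additive $\eps$, which can be fixed by shifting.

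The refined feasible sets are exactly those $S$ with $\vx(S) \in P(\cF)$. Hence the optimum of $\tilde{f}$ is at least $(1-\eps)\max\{F(\vx) \mid \vx \in P(\cF)\}$, obtained by rounding an optimal fractional point to a grid of resolution $1/|X|$. Condition (2) of Definition~\ref{def:strong_symmetry} is used to check that feasibility is preserved under the symmetrizing permutations. For $\tilde{g}$, every value is at most $(1+\eps)\max\{F(\bar{\vx}) \mid \vx \in P(\cF)\}$.

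\textbf{Step 3: hide the instance with a random labeling.} Apply a uniformly random bijection of each copy $\{u\} \times X$, and more generally a random relabeling consistent with the $\cG$-orbits. Then any fixed query $S$ chosen without knowledge of the labeling has $\vx(S)$ concentrated near $\bar{\vx}(S)$, with failure probability $e^{-\Omega(|X|)}$ by a Chernoff or hypergeometric bound. On such queries $\tilde{f}$ and $\tilde{g}$ agree by (a). By a union bound, a (possibly randomized, adaptive) algorithm making $e^{o(|X|)}$ queries sees identical answers on both instances with high probability; adaptivity is handled by the standard argument that answers coincide until the first distinguishing query.

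Therefore the algorithm's output on the $\tilde{f}$ instance has value at most roughly $(1+\eps)\max\{F(\bar{\vx}) \mid \vx \in P(\cF)\}$, while the optimum is about $\max\{F(\vx) \mid \vx \in P(\cF)\}$. This gives the ratio bound $(1+O(\eps))\gamma$, and rescaling $\eps$ finishes the proof. The monotone case follows because every construction above preserves monotonicity.
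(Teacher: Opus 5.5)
The paper does not prove this theorem; it imports it from Vondr\'ak and only remarks, in Section~\ref{sec:specific_classes}, which two properties of refinements his proof uses: invariance under applying permutations of $\cG$ copy by copy, and preservation of the fractional optimum. Your architecture (glue $F$ to its symmetrization, discretize, hide the instance, union bound) is Vondr\'ak's. However, Step~1, which carries essentially all of the difficulty, is left open, and the plan you give for it would fail.

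\textbf{The correction term is wrong in size.} The term $256M|X|\phi(D(\vx))$ grows with the refinement size. But $\hat{F}$ must be a fixed function on $[0,1]^N$, chosen before $|X|\to\infty$, that stays within $\eps$ of $F$.

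\textbf{The correction term is wrong in shape.} No penalty depending on $\vx$ only through $D(\vx)$ can absorb the gluing errors. Its Hessian is $4\psi''(D)(\vx-\bar{\vx})(\vx-\bar{\vx})^{T}+2\psi'(D)(I-P)$, with $P$ the orbit-averaging matrix. So once $\cG$ has two or more orbits, the entry for $u,v$ in different orbits is $0$ whenever $x_u=\bar{x}_u$, and it is nonnegative for concave $\psi$ when the two deviations have opposite signs. The gluing errors at such points need not vanish. Such a penalty is also not monotone.

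\textbf{You misplace the role of condition (1).} $G(\vx)=F(\bar{\vx})$ is in general not a multilinear extension: for the cut function of a single edge under the swap, $G(\vx)=(x_1+x_2)-(x_1+x_2)^2/2$. Moreover, nonpositivity of all second partials of $G$ holds for every submodular $f$, because $\vx\mapsto\bar{\vx}$ is linear with nonnegative coefficients. What symmetry of $f$ actually provides is a first-order cancellation. $\nabla F(\bar{\vx})$ is constant on orbits and $\vx-\bar{\vx}$ sums to zero on each orbit, so $\nabla F(\bar{\vx})\cdot(\vx-\bar{\vx})=0$ and $|\nabla F(\tilde{\vx})\cdot(\vx-\bar{\vx})|\le M|N|D(\vx)$.

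\textbf{The missing idea} is to exploit this cancellation as follows.
\begin{itemize}
\item Take $\tilde{\vx}=\vx+\phi(D(\vx))(\bar{\vx}-\vx)$, where $\phi\equiv 1$ on $[0,\delta]$, $\phi\equiv 0$ beyond a small threshold, $|\phi'(t)|\le\alpha/t$ and $|\phi''(t)|=O(\alpha/t^2)$. This forces $\delta$ to be exponentially small in $1/\alpha$.
\item Then every error term in the first and second partials of $F(\tilde{\vx})$ is $O(\alpha M|N|)$.
\item These errors are dominated by adding $K\alpha J(\vx)$ with $K=\Theta(M|N|)$ and $J(\vx)=Q^2-(Q-\sum_{u\in N}x_u)^2$, $Q=3|N|$. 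All second partials of $J$ equal $-2$ and its first partials are positive.
\item Since $J(\bar{\vx})=J(\vx)$, you can set $\hat{G}(\vx)\triangleq\hat{F}(\bar{\vx})$ and obtain (a)--(c).
\end{itemize}

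\textbf{Step~3 also needs repair.}
\begin{itemize}
\item A random bijection of each fiber $\{u\}\times X$ hides nothing, because $\tilde{f}$ depends only on the counts $|S\cap(\{u\}\times X)|$. The relabeling that works applies an independent uniform $\sigma_x\in\cG$ to each copy $N\times\{x\}$; this is the per-copy symmetry the paper refers to. Under it, each coordinate of $\vx(\sigma(S))$ is an average of $|X|$ independent indicators whose mean is the matching coordinate of $\overline{\vx(S)}$. This gives the $e^{-\Omega(|X|)}$ bound, and $\overline{\vx(\sigma(S))}=\overline{\vx(S)}$.
\item The adaptive argument needs $\hat{G}(\vx)=\hat{F}(\bar{\vx})$ exactly, whereas your list (a)--(c) only has $\hat{G}\approx G$. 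Exactness is what makes the answers of the $\tilde{g}$-instance independent of the hidden $\sigma$. Then, for fixed random coins, the query sequence (including the output set) is determined in advance and the union bound applies.
\item Rounding an optimal $\vx^*$ coordinatewise to the $1/|X|$-grid can leave $P(\cF)$ when $\cF$ is not down-closed. Instead, round the coefficients of a convex combination of characteristic vectors, as in Lemma~\ref{lem:parallel_value_conserving}. This yields a set whose restriction to every copy lies in $\cF$, so its preimage under the per-copy relabeling is still feasible and has the same value.
\end{itemize}
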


\subsection{Additional Notation} \label{ssc:additional_notation}

Given a condition $C$, we denote by $\characteristic[C]$ an indicator for the event that this condition holds. Additionally, given a subset $S$ of a ground set $N$, we denote by $\characteristic_S$ the characteristic vector of $S$, i.e., a vector in $\{0, 1\}^N$ that assigns the value one to the coordinates corresponding to the elements of $S$.

\section{Symmetric Instance} \label{sec:symmetric_instance}

In this section, we describe the symmetric instance underlying our hardness results, and then derive Theorem~\ref{thm:hardness} for general matroid constraints by applying Theorem~\ref{thm:symmetry_gap} to this symmetric instance. Our symmetric instance is parameterized by three values: an integer $\ell \geq 1$ and two positive real values $w_1$ and $w_2$ whose sum is at most $1$. The ground set $N$ of the instance is the union of three disjoint sets $A \triangleq \{a_i \mid i \in [\ell]\}$, $B \triangleq \{b_i \mid i \in [\ell]\}$ and $C =  \{c_{i, 1}, c_{i, 2} \mid i \in [\ell]\}$. The constraint $\cF$ of the instance is
\[
    \cF
    \triangleq
    \{S \subseteq N \mid |S \cap  A| \leq 1\}
    \enspace.
\]
Notice that this constraint is a simplified partition matroid constraint defined by a partition whose parts contain a single element each, except for one part that contains all the elements of $A$.

To define the objective function of the symmetric instance, it is useful to define first for every two non-empty disjoint sets $X, Y \subseteq N$ the function
\[
    g_{X, Y}(S)
    \triangleq
    \characteristic[X \cap S \neq \varnothing] \cdot \characteristic[Y \cap S \neq Y]
    \qquad
    \forall\; S \subseteq N
    \enspace.
\]
Intuitively, $g$ is the cut function of a hyperedge whose foot and head include the elements of $X$ and $Y$, respectively. The objective function of the symmetric instance is then given for every $S \subseteq N$ by
\[
    f(S)
    \triangleq
    \sum_{i = 1}^\ell \bigg[\ell(1 - w_1 - w_2) \cdot g_{\{a_i\}, \{b_i\}}(S) + w_1 \cdot g_{\{a_i, b_i\}, \{c_{i, 1}, c_{i, 2}\}}(S) + w_2 \cdot \sum_{j = 1}^2 g_{\{c_{i, j}\}, \{c_{i, 3 - j}\}}(S)\bigg]
    \enspace.
\]

\begin{observation} \label{obs:f_properties}
The function $f$ is non-negative and submodular.
\end{observation}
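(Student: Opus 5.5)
Since the class of non-negative submodular functions is closed under non-negative linear combinations (as noted in Section~\ref{ssc:problem_definition}), and $f$ is such a combination of functions $g_{X,Y}$ with coefficients $\ell(1-w_1-w_2)$, $w_1$ and $w_2$, all non-negative because $w_1, w_2 > 0$ and $w_1 + w_2 \leq 1$, it suffices to show that every $g_{X,Y}$ with $X, Y$ non-empty and disjoint is non-negative and submodular. Non-negativity is immediate, as $g_{X,Y}$ is a product of two indicators and takes values in $\{0,1\}$.

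For submodularity I will use the marginal-value definition: for $S \subseteq T \subseteq N$ and $u \in N \setminus T$, show $g_{X,Y}(S \cup \{u\}) - g_{X,Y}(S) \geq g_{X,Y}(T \cup \{u\}) - g_{X,Y}(T)$. Write $p(S) = \characteristic[X \cap S \neq \varnothing]$, which is monotone non-decreasing, and $q(S) = \characteristic[Y \not\subseteq S]$, which is monotone non-increasing. There are three cases for $u$. If $u \notin X \cup Y$, both marginals are zero. If $u \in X$, then $q$ is unaffected (since $X \cap Y = \varnothing$), and the marginal equals $q(S)\cdot\characteristic[X \cap S = \varnothing]$. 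Both factors are non-increasing in the set, so this marginal is at least as large at $S$ as at $T$. If $u \in Y$, then $p$ is unaffected, and the marginal equals $-p(S)\cdot\characteristic[Y \setminus S = \{u\}]$. Here $p$ is non-decreasing. Moreover, if $Y \setminus S = \{u\}$ then $Y \setminus T = \{u\}$ as well, because $u \notin T$; hence the indicator is also non-decreasing on sets not containing $u$. So the marginal, being the negative of a product of non-decreasing $\{0,1\}$ quantities, is at least as large at $S$ as at $T$.

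The only mildly delicate point is this last case, namely checking that the indicator $\characteristic[Y \setminus S = \{u\}]$ is monotone on sets avoiding $u$. Everything else is direct bookkeeping. Alternatively, one could cite the standard fact that directed hypergraph cut functions are submodular, but the case check above is short and self-contained.
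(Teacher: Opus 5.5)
Your proof is correct, and its top-level structure matches the paper's. Both arguments use closure under non-negative combinations to reduce to showing that each hyperedge cut function $g_{X,Y}$ is submodular. You also make explicit that the coefficient $\ell(1-w_1-w_2)$ is non-negative, which the paper leaves implicit.

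Where you differ is in how $g_{X,Y}$ is handled. The paper verifies the lattice inequality $g_{X,Y}(S\cup T)+g_{X,Y}(S\cap T)\le g_{X,Y}(S)+g_{X,Y}(T)$ for arbitrary $S,T$. It splits into cases according to whether $S\cap X=\varnothing$, $Y\subseteq S$, or neither. You instead verify diminishing marginal returns, splitting on the location of the added element $u$ and using the factorization $g_{X,Y}=p\cdot q$ into a non-decreasing and a non-increasing indicator.

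Your version explains \emph{why} the function is submodular. Because $X\cap Y=\varnothing$, each element touches only one factor. Hence every marginal is either a product of non-increasing indicators or the negative of a product of non-decreasing ones. The paper's version is a more mechanical check, but as a side effect it never uses the disjointness of $X$ and $Y$, so it proves submodularity of $g_{X,Y}$ for arbitrary $X,Y$. Your reliance on disjointness costs nothing here, since every $g_{X,Y}$ appearing in $f$ has disjoint arguments.

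The step you flag as delicate is argued correctly. Namely, $\characteristic[Y\setminus S=\{u\}]$ is non-decreasing over sets avoiding $u$, because $Y\setminus T\subseteq Y\setminus S$ and $u\in Y\setminus T$.
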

\begin{proof}
The non-negativity of $f$ is immediate, therefore, we concentrate on proving that $f$ is submodular. By the closure of submodular functions to addition and multiplication by a non-negative scalar, it suffices to argue that $g_{X, Y} (\cdot)$ is a submodular function for every two non-empty disjoint sets $X, Y \subseteq N$. Intuitively, this is true since $g_{X, Y}$ is a cut function. However, to prove it formally, we need to show that for arbitrary two sets $S, T \subseteq N$, it holds that $g_{X, Y}(S \cup T) + g_{X, Y}(S \cap T) \leq g_{X, Y}(S) + g_{X, Y}(T)$. We show that by considering three cases.

The first case is when $S \cap X = \varnothing$. In this case, $g_{X, Y}(S) = 0$, and thus,
\begin{align*}
    g_{X, Y}(S \cup T) +{}& g_{X, Y}(S \cap T)\\
    ={} &
    \characteristic[X \cap (S \cup T) \neq \varnothing] \cdot \characteristic[Y \cap (S \cup T) \neq Y] + \characteristic[X \cap S \cap T \neq \varnothing] \cdot \characteristic[Y \cap S \cap T \neq Y]\\
    ={} &
    \characteristic[X \cap T \neq \varnothing] \cdot \characteristic[Y \cap (S \cup T) \neq Y] + \characteristic[\varnothing \neq \varnothing] \cdot \characteristic[Y \cap S \cap T \neq Y]\\
    \leq{} &
    \characteristic[X \cap T \neq \varnothing] \cdot \characteristic[Y \cap T \neq Y] + 0 \cdot \characteristic[Y \cap S \cap T \neq Y]
    =
    g_{X, Y}(T) + g_{X, Y}(S)
    \enspace.
\end{align*}
The second case is when $S \cap Y = Y$. In this case, again $g_{X, Y}(S) = 0$, and thus,
\begin{align*}
    g_{X, Y}(S \cup T) +{}& g_{X, Y}(S \cap T)\\
    ={} &
    \characteristic[X \cap (S \cup T) \neq \varnothing] \cdot \characteristic[Y \cap (S \cup T) \neq Y] + \characteristic[X \cap S \cap T \neq \varnothing] \cdot \characteristic[Y \cap S \cap T \neq Y]\\
    ={} &
    \characteristic[X \cap (S \cup T) \neq \varnothing] \cdot \characteristic[Y \neq Y] + \characteristic[X \cap S \cap T \neq \varnothing] \cdot \characteristic[Y \cap T \neq Y]\\
    \leq{} &
    \characteristic[X \cap (S \cup T) \neq \varnothing] \cdot 0 + \characteristic[X \cap T \neq \varnothing] \cdot \characteristic[Y \cap T \neq Y]
    =
    g_{X, Y}(S) + g_{X, Y}(T)
    \enspace.
\end{align*}
Finally, consider the case that $S \cap X \neq \varnothing$ and $S \cap Y \neq Y$. In this case $g_{X, Y}(S) = 1$, and thus,
\begin{align*}
    g_{X, Y}(S \cup T) +{}& g_{X, Y}(S \cap T)\\
    ={} &
    \characteristic[X \cap (S \cup T) \neq \varnothing] \cdot \characteristic[Y \cap (S \cup T) \neq Y] + \characteristic[X \cap S \cap T \neq \varnothing] \cdot \characteristic[Y \cap S \cap T \neq Y]\\
    ={} &
    1 \cdot \characteristic[Y \cap (S \cup T) \neq Y] + \characteristic[X \cap S \cap T \neq \varnothing] \cdot 1\\
    \leq{} &
    \characteristic[Y \cap T \neq Y] + \characteristic[X \cap T \neq \varnothing]
    \leq
    1 + g_{X, Y}(T)
    =
    g_{X, Y}(S) + g_{X, Y}(T)
    \enspace.
    \qedhere
\end{align*}
\end{proof}

Our next goal is to bound the symmetry gap of $\max\{f(S) \mid S \in \cF\}$. We begin by showing that there is a valuable integral solution for this problem.
\begin{observation} \label{obs:opt_value}
There is a set $O \in \cF$ such that $f(O) \geq \ell$.
\end{observation}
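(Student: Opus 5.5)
We need a set $O$ with $|O\cap A|\le 1$ and $f(O)\ge \ell$. The natural candidate is $O = \{a_1\} \cup \{c_{i,1} \mid i\in[\ell]\}$, which is feasible since it has exactly one element of $A$. The plan is to compute $f(O)$ term by term, using that each $g_{X,Y}(O)$ is an indicator.

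First, the $A$–$B$ terms: $g_{\{a_1\},\{b_1\}}(O)=1$ because $a_1\in O$ and $b_1\notin O$, and every other $g_{\{a_i\},\{b_i\}}$ vanishes. This block contributes $\ell(1-w_1-w_2)$.

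Second, the $C$-internal terms: for each $i$, exactly one of $c_{i,1},c_{i,2}$ lies in $O$, so $g_{\{c_{i,1}\},\{c_{i,2}\}}(O)=1$ while the reverse term is $0$. Summing over $i$, this block contributes $w_2\ell$.

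Third, the middle terms $g_{\{a_i,b_i\},\{c_{i,1},c_{i,2}\}}$ are $1$ only for $i=1$, since $a_1\in O$ and $\{c_{1,1},c_{1,2}\}\not\subseteq O$. They therefore contribute $w_1$. The total is
\[
f(O)=\ell(1-w_1-w_2)+w_2\ell+w_1=\ell - w_1(\ell-1),
\]
which falls short of $\ell$ whenever $\ell>1$, so this candidate fails. This shortfall is the main obstacle, and the fix is to recover the $w_1$ weight on every index $i$, not just $i=1$.

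To do this, add all of $B$ except $b_1$: take
\[
O=\{a_1\}\cup\{b_i \mid 2\le i\le \ell\}\cup\{c_{i,1}\mid i\in[\ell]\}.
\]
This set is still feasible, since $B$ and $C$ are unconstrained. We then recheck each block.
\begin{itemize}
\item $g_{\{a_1\},\{b_1\}}(O)=1$ still holds because $b_1\notin O$, and the other $A$–$B$ terms are $0$ because $a_i\notin O$ for $i\ge2$. This block contributes $\ell(1-w_1-w_2)$.
\item For every $i$, the set $\{a_i,b_i\}$ meets $O$ (via $a_1$ when $i=1$, via $b_i$ when $i\ge2$), and $c_{i,2}\notin O$. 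So every middle term equals $1$, contributing $w_1\ell$.
\item The $C$-internal terms still contribute $w_2\ell$.
\end{itemize}
Summing gives $f(O)=\ell(1-w_1-w_2)+w_1\ell+w_2\ell=\ell$, as required. The only step needing care is confirming that adding the $b_i$ does not destroy the $A$–$B$ edge for $i=1$, which holds because $b_1$ is deliberately excluded.
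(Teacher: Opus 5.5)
Your final set $O=\{a_1\}\cup\{b_i \mid 2\le i\le \ell\}\cup\{c_{i,1}\mid i\in[\ell]\}$ is exactly the paper's set. Your term-by-term check matches the paper's verification: the one cut $A$--$B$ hyperedge contributes $\ell(1-w_1-w_2)$, all $\ell$ middle hyperedges contribute $\ell w_1$, and one $C$-internal edge per index contributes $\ell w_2$, giving $f(O)=\ell$; the first candidate you tried is a harmless detour, and the proof is correct.
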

\begin{proof}
Let us define $O = \{a_1, c_{1, 1}\} \cup \{b_i, c_{i, 1} \mid 2 \leq i \leq \ell\}$. Notice that $O$ indeed belongs to $\cF$ since its intersection with $A$ contains only the element $a_1$. Furthermore,
\[
    g_{\{a_1\}, \{b_1\}}(O) = 1 \enspace,
    \qquad
    g_{\{a_i, b_i\}, \{c_{i, 1}, c_{i, 2}\}}(O) = 1 \quad \forall\; i \in [\ell]
\]
and
\[
    g_{\{c_{i, 1}\}, \{c_{i, 2}\}}(O) = 1
    \quad
    \forall\; i \in [\ell]
    \enspace.
\]
Therefore,
\[
    f(O)
    \geq
    \ell(1 - w_1 - w_2) + \ell w_1
    + \ell w_2
    =
    \ell
    \enspace.
    \qedhere
\]
\end{proof}

Next, let us show that any symmetric solution for $\max\{f(S) \mid S \in \cF\}$ has only low value. One can verify that $\max\{f(S) \mid S \in \cF\}$ is strongly symmetric with respect to the group $\cG$ of permutations formed by taking the closure of the following permutations over the ground set $N$.
\begin{itemize}
    \item Any permutation reordering the indices $i$. Formally, for every permutation $\pi$ over $[\ell]$, there is such a permutation $\sigma_\pi$ over $N$ given by
    \[
        \sigma_\pi(u)
        =
        \begin{cases}
            a_{\pi(i)} & \text{if $u = a_i$} \enspace,\\
            b_{\pi(i)} & \text{if $u = b_i$} \enspace,\\
            c_{\pi(i), j} & \text{if $u = c_{i, j}$}
        \end{cases}
        \qquad
        \forall\; u \in N
        \enspace.
    \]
    \item The permutation $\sigma_{\text{s}}$ that swaps the pairs of elements in $C$ with each other. More formally,
    \[
        \sigma_{\text{s}}(u)
        =
        \begin{cases}
            a_{i} & \text{if $u = a_i$} \enspace,\\
            b_{i} & \text{if $u = b_i$} \enspace,\\
            c_{i, 3 - j} & \text{if $u = c_{i, j}$}
        \end{cases}
        \qquad
        \forall\; u \in N
        \enspace.
    \]
\end{itemize}

Recall that $P(\cF)$ is the convex hull of $\{\characteristic_S \mid S \in \cF\}$.
\begin{lemma} \label{lem:symmetric_value}
For every vector $\vx \in P(\cF)$, let $\bar{\vx} = \bE_{\sigma \in \cG}[\sigma(\vx)]$. Then,
\[
    F(\bar{\vx})
    \leq
    \ell \cdot \max\bigg\{1 - w_1 - w_2/2, \frac{(w_1 + w_2)^2}{w_1 + 2w_2}\bigg\} + 1
    \enspace.
\]
\end{lemma}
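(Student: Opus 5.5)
The plan is to compute $F(\bar{\vx})$ explicitly using the symmetry of $\bar{\vx}$. Fix $\vx \in P(\cF)$. Since $\cG$ contains every $\sigma_\pi$ and $\sigma_{\text{s}}$, averaging over $\cG$ makes all coordinates of $\bar{\vx}$ equal within each of $A$, $B$ and $C$. Write $\bar{x}_{a_i} = \alpha$, $\bar{x}_{b_i} = \beta$ and $\bar{x}_{c_{i,j}} = \gamma$. Every point of $P(\cF)$ satisfies $\sum_{i} x_{a_i} \leq 1$, and averaging preserves this sum. Hence $\ell\alpha \leq 1$, while $\beta, \gamma \in [0,1]$.

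Next I will evaluate the multilinear extension term by term. For each $g_{X,Y}$ with $X \cap Y = \varnothing$, the expectation under $\RSet(\bar{\vx})$ is $\Pr[X \cap \RSet \neq \varnothing]\cdot \Pr[Y \not\subseteq \RSet]$, because the two events depend on disjoint coordinates. This gives
\[
F(\bar{\vx}) = \ell^2(1-w_1-w_2)\,\alpha(1-\beta) + \ell w_1 \bigl(1-(1-\alpha)(1-\beta)\bigr)(1-\gamma^2) + 2\ell w_2\,\gamma(1-\gamma).
\]
I will bound the $\alpha$-dependence using $\ell\alpha \leq 1$. The first term is at most $\ell(1-w_1-w_2)(1-\beta)$. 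For the second term, $1-(1-\alpha)(1-\beta) \leq \alpha + \beta$, so it is at most $\ell w_1\beta(1-\gamma^2) + w_1 \leq \ell w_1 \beta(1-\gamma^2) + 1$. Together this yields $F(\bar{\vx}) \leq \ell \cdot h(\beta,\gamma) + 1$, where
\[
h(\beta,\gamma) = (1-w_1-w_2)(1-\beta) + w_1\beta(1-\gamma^2) + 2w_2\gamma(1-\gamma).
\]

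It then remains to show $\max_{\beta,\gamma\in[0,1]} h \leq \max\{1-w_1-w_2/2,\ (w_1+w_2)^2/(w_1+2w_2)\}$. Since $h$ is linear in $\beta$, its maximum is attained at $\beta \in \{0,1\}$. At $\beta = 0$, we have $h = 1-w_1-w_2 + 2w_2\gamma(1-\gamma) \leq 1 - w_1 - w_2/2$, using $\gamma(1-\gamma)\leq 1/4$. At $\beta = 1$, we have $h = w_1 + 2w_2\gamma - (w_1+2w_2)\gamma^2$. This concave quadratic is maximized at $\gamma = w_2/(w_1+2w_2) \in [0,1]$, with value $w_1 + w_2^2/(w_1+2w_2) = (w_1+w_2)^2/(w_1+2w_2)$.

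The only step needing care is the $\alpha$ handling. The factor $\ell^2$ in the first term must be absorbed by $\ell\alpha\leq 1$. In the second term, the $\alpha$ contribution must be discarded into the additive $+1$ rather than kept, since keeping it would spoil the clean form of the bound. Everything else is routine one-variable optimization.
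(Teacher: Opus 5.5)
Your proposal is correct and follows essentially the same route as the paper. You compute $F(\bar{\vx})$ in terms of the common values, absorb the $\ell^2$ factor via $\ell\alpha\le 1$, and use linearity in $\beta$ to reduce to $\beta\in\{0,1\}$, where your two one-variable optimizations match the paper's. The only cosmetic difference is that you bound $1-(1-\alpha)(1-\beta)\le\alpha+\beta$ and pull out the additive $+1$ before optimizing. The paper instead keeps the term $w_1(1+\beta(\ell-1))(1-\gamma^2)$ and discards $w_1(1-\gamma^2)\le 1$ only in the $\beta=0$ case.
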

\begin{proof}
Notice that the fact that $\sigma_\pi \in \cG$ for every permutation $\pi$ of $[\ell]$ implies that $\bar{x}_{a_i}$ has the same value for all $i \in [\ell]$. Let us denote by $a$ this common value. Similarly, there must be common values $b$, $c_1$ and $c_2$ such that $\bar{x}_{b_i} = b$, $\bar{x}_{c_{i, 1}} = c_1$ and $\bar{x}_{c_{i, 2}} = c_2$ for all $i \in [\ell]$. Furthermore, the membership of $\sigma_s$ in $\cG$ implies that $c_1 = \bar{x}_{c_{1, 1}} = \bar{x}_{c_{1, 2}} = c_2$, and thus, we can denote by $c$ the common value of $c_1$ and $c_2$.

Given the above observations and notation, we can write
\begin{align} \label{eq:basic_upper_bound}
    F(\bar{\vx})
    ={} &
    \ell^2(1 - w_1 - w_2) \cdot a(1 - b) + \ell w_1 \cdot (a + b - ab)(1 - c^2) + 2\ell w_2c(1 - c)\\\nonumber
    \leq{} &
    \ell(1 - w_1 - w_2)(1 - b) + w_1 \cdot (1 + b(\ell - 1))(1 - c^2) + 2\ell w_2c(1 - c)
    \enspace,
\end{align}
where the inequality holds since the membership of $\bar{\vx}$ in $P(\cF)$ implies that $a\ell = \sum_{i = 1}^\ell \bar{x}_{a_i} \leq 1$.
When $b = 0$, the rightmost side of Inequality~\eqref{eq:basic_upper_bound} reduces to 
\[
    \ell(1 - w_1 - w_2) + w_1(1 - c^2) + 2\ell w_2c(1 - c)
    \leq
    \ell(1 - w_1 - w_2) + 1 + \ell w_2/2
    =
    \ell(1 - w_1 - w_2/2) + 1
    \enspace,
\]
where the inequality uses the fact that $c(1 - c) \leq 1/4$ for every $c \in [0, 1]$. When $b = 1$, the rightmost side of Inequality~\eqref{eq:basic_upper_bound} reduces to
\begin{align*}
    \ell w_1 (1 - c^2) + 2\ell w_2 c(1 - c)
    \leq{} &
    \ell w_1 \bigg(1 - \frac{w_2^2}{(w_1 + 2w_2)^2}\bigg) + 2\ell w_2 \cdot \frac{w_2(w_1 + w_2)}{(w_1 + 2w_2)^2}\\
    ={} &
    \ell \cdot \bigg(w_1 + \frac{w_1w_2^2 + 2w_2^3}{(w_1 + 2w_2)^2}\bigg)
    =
    \ell \cdot \bigg(w_1 + \frac{w_2^2}{w_1 + 2w_2}\bigg)
    =
    \frac{\ell(w_1 + w_2)^2}{w_1 + 2w_2}
    \enspace,
\end{align*}
where the inequality holds since the derivative of $\ell w_1 (1 - c^2) + 2\ell w_2 c(1 - c)$ with respect to $c$ is
\[
    2\ell \cdot [-w_1 c + w_2(1 - 2c)]
    =
    2\ell \cdot [w_2 - c(w_1 + 2w_2)]
    \enspace.
\]
The lemma now follows by noticing that the rightmost side of Inequality~\eqref{eq:basic_upper_bound} is a linear function of $b$, and thus, there must exist an assignment that maximizes it and assigns to $b$ one of the two extreme values $0$ or $1$.
\end{proof}

\begin{corollary} \label{cor:symmetry_gap}
The symmetry gap of the problem $\max\{f(S) \mid S \in \cF\}$ is at most $\max\big\{1 - w_1 - w_2/2,\allowbreak \frac{(w_1 + w_2)^2}{w_1 + 2w_2}\big\} + \ell^{-1}$. In particular, for $w_1 = w_2 = 6/17$, the above symmetry gap is at most $8/17 + \ell^{-1}$.
\end{corollary}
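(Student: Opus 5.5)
The plan is to bound the numerator and the denominator in the definition of the symmetry gap separately, using the two results just established. For the numerator, Lemma~\ref{lem:symmetric_value} already controls $F(\bar{\vx})$ for every $\vx \in P(\cF)$, so
\[
    \max\{F(\bar{\vx}) \mid \vx \in P(\cF)\} \leq \ell \cdot \max\bigg\{1 - w_1 - w_2/2, \frac{(w_1 + w_2)^2}{w_1 + 2w_2}\bigg\} + 1
    \enspace.
\]
For the denominator, Observation~\ref{obs:opt_value} gives a set $O \in \cF$ with $f(O) \geq \ell$. Since $\characteristic_O \in P(\cF)$ and the multilinear extension agrees with $f$ on integral points, $F(\characteristic_O) = f(O)$, so $\max\{F(\vx) \mid \vx \in P(\cF)\} \geq \ell$. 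This denominator is positive, so the ratio is well defined. We should also note that the problem is strongly symmetric with respect to $\cG$, as the paper asserts before the lemma, so the symmetry gap is defined.

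Dividing the two bounds gives a symmetry gap of at most $\max\{1 - w_1 - w_2/2, (w_1+w_2)^2/(w_1+2w_2)\} + \ell^{-1}$, which is the first claim. The step that needs care is only that the numerator bound holds uniformly over $\vx$, so it bounds the maximum, and that the denominator's maximum is at least the value at the specific point $\characteristic_O$.

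For the particular choice, substitute $w_1 = w_2 = 6/17$. These are positive with sum $12/17 \leq 1$, so they are admissible parameters. The first term is
\[
    1 - \tfrac{6}{17} - \tfrac{3}{17} = \tfrac{8}{17}
    \enspace.
\]
The second term is
\[
    \frac{(12/17)^2}{18/17} = \frac{144}{17 \cdot 18} = \frac{8}{17}
    \enspace.
\]
Both terms equal $8/17$, so the maximum is $8/17$ and the gap is at most $8/17 + \ell^{-1}$.

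No step here is a real obstacle. The only point worth double-checking is the arithmetic of the second term, and the fact that this choice of weights balances the two terms, which is presumably why it was selected.
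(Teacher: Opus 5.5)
Your proposal is correct and follows the paper's proof: you bound the numerator uniformly via Lemma~\ref{lem:symmetric_value}, lower-bound the denominator by $F(\characteristic_O) = f(O) \geq \ell$ via Observation~\ref{obs:opt_value}, and divide. Your explicit check that both terms of the maximum equal $8/17$ at $w_1 = w_2 = 6/17$ fills in arithmetic that the paper leaves implicit.
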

\begin{proof}
By Observation~\ref{obs:opt_value} and Lemma~\ref{lem:symmetric_value}, the symmetry gap of $\{f(S) \mid S \in \cF\}$ is
\begin{align*}
    \frac{\max\{F(\bar{\vx}) \mid \vx \in P(\cF)\}}{\max\{F(\vx) \mid \vx \in P(\cF)\}}
    \leq{} &
    \frac{\ell \cdot \max\big\{1 - w_1 - w_2/2, \frac{(w_1 + w_2)^2}{w_1 + 2w_2}\big\} + 1}{f(O)}\\
    \leq{} &
    \max\bigg\{1 - w_1 - w_2/2, \frac{(w_1 + w_2)^2}{w_1 + 2w_2}\bigg\} + \ell^{-1}
    \enspace.
    \qedhere
\end{align*}
\end{proof}

\noindent \textbf{Remark.} The values assigned by Corollary~\ref{cor:symmetry_gap} to the parameters $w_1$ and $w_2$ are the ones minimizing the symmetry gap guarantee. See Appendix~\ref{app:determining_weights} for an explanation how these values were found.

Plugging the guarantee of Corollary~\ref{cor:symmetry_gap} into Theorem~\ref{thm:symmetry_gap}, and observing that every sub-exponential time algorithm necessarily makes only a sub-exponential number of value oracle queries, we get the following lemma.
\begin{lemma}
For every constant $\eps' > 0$, no sub-exponential time algorithm can guarantee $(1 + \eps')(8/17 + \ell^{-1})$-approximation for the problem of maximizing a non-negative submodular function subject to a constraint $\tilde{\cF}$ that is a refinement of $\cF$.
\end{lemma}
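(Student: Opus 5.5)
The plan is to apply Theorem~\ref{thm:symmetry_gap} directly to the instance $\max\{f(S) \mid S \in \cF\}$ with $w_1 = w_2 = 6/17$. By Observation~\ref{obs:f_properties}, $f$ is non-negative and submodular. The instance is strongly symmetric with respect to the group $\cG$ generated by the permutations $\sigma_\pi$ and $\sigma_{\text{s}}$, and I would verify this explicitly.

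\textbf{Invariance of $f$.} Condition (1) of Definition~\ref{def:strong_symmetry} holds term by term. Each $\sigma_\pi$ maps the summand of $f$ indexed by $i$ to the summand indexed by $\pi(i)$. The permutation $\sigma_{\text{s}}$ fixes the terms $g_{\{a_i\},\{b_i\}}$ and $g_{\{a_i,b_i\},\{c_{i,1},c_{i,2}\}}$, and it swaps the two terms $g_{\{c_{i,j}\},\{c_{i,3-j}\}}$.

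\textbf{Invariance of $\cF$.} For condition (2), the expectation $\bE_{\sigma\in\cG}[\characteristic_{\sigma(S)}]$ assigns every $a_i$ the value $|S\cap A|/\ell$. So if two sets $S$ and $S'$ have equal symmetrized vectors, then $|S\cap A| = |S'\cap A|$. Since membership in $\cF$ depends only on $|S\cap A|$, it follows that $S\in\cF \iff S'\in\cF$.

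\textbf{Bounding the gap.} Corollary~\ref{cor:symmetry_gap} gives symmetry gap $\gamma \le 8/17 + \ell^{-1}$. Theorem~\ref{thm:symmetry_gap} is stated for the exact gap $\gamma$, and it yields that a $(1+\eps')\gamma$-approximation for the refinement class requires exponentially many value queries. Because $\gamma \le 8/17+\ell^{-1}$, any algorithm guaranteeing $(1+\eps')(8/17+\ell^{-1})$-approximation also guarantees $(1+\eps')\gamma$-approximation. Such an algorithm therefore needs exponentially many queries.

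\textbf{Conclusion.} A sub-exponential time algorithm can make only sub-exponentially many value oracle queries, since each query costs at least unit time. Hence no such algorithm achieves the stated ratio for refinements of $\cF$.

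The main obstacle is only the verification of strong symmetry, condition (2) in particular. This requires checking that the averaged characteristic vector records exactly $|S\cap A|$ on the $A$-coordinates. That holds because the group acts transitively on $A$, via the $\sigma_\pi$'s, so each $a_i$ receives the average $|S\cap A|/\ell$. Everything else is direct substitution into Theorem~\ref{thm:symmetry_gap}.
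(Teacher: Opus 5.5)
Your proposal is correct and follows the same route as the paper: you plug the bound $\gamma \le 8/17 + \ell^{-1}$ from Corollary~\ref{cor:symmetry_gap} into Theorem~\ref{thm:symmetry_gap}, and you note that sub-exponential time allows only sub-exponentially many value queries. Your explicit check of strong symmetry (invariance of $f$ under the generators, and the $A$-coordinates of the symmetrized vector equalling $|S \cap A|/\ell$ by transitivity) fills in the ``one can verify'' step that the paper leaves to the reader.
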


The last lemma implies the result of Theorem~\ref{thm:hardness} for general matroid constraints by setting $\eps' = \eps/3$, $\ell = \lceil 3/\eps\rceil$ and observing that every refinement of $\cF$ is a (generalized) partition matroid.

\section{Hardness for Special Classes of Matroid Constraints} \label{sec:specific_classes}

In this section, our goal is to prove Theorem~\ref{thm:hardness} for simplified partition matroid constraints and cardinality constraints. Each one of the following subsections prove it for one of these constraint types.

\subsection{Hardness for Simplified Partition Matroid Constraints}

Proving Theorem~\ref{thm:hardness} for simplified partition matroid constraints requires us to replace the concept of refinements due to~\cite{vondrak2013symmetry} with a slightly different concept, termed \emph{parallel refinement}. Intuitively, under this kind of refinement, we create multiple copies of the ground set, and a set is feasible under the refinement if it is feasible under the original constraint when restricted to each one of these copies. The idea of parallel refinements can be traced back to the proof of~\cite{feldman2011unified} for the hardness of the submodular welfare problem. 

\begin{definition}[Parallel Refinement]
Consider a set $\cF \subseteq 2^N$, and let $X$ be some set. We say that $\tilde{\cF} \subseteq 2^{N \times X}$ is a parallel refinement
of $\cF$ if
\[
    \tilde{F}
	=
	\Big\{S \subseteq N \times X ~\Big|~ \forall_{x \in X} \{u \mid (u, x) \in S\} \in \cF\} \Big\}
	\enspace.
\]
\end{definition}

The proof of Theorem~\ref{thm:symmetry_gap} uses some properties of refinements. The next two lemmata show that parallel refinements also have these properties. The first property is that parallel refinements preserve symmetry in the sense that if $\cF$ is closed under a group $\cG$ of permutations, then a parallel refinement of $\cF$ is closed under applications of permutations from $\cG$ to copies of the ground set.

\begin{lemma} \label{lem:parallel_symmetry}
Assume the problem $\max\{f(S) \mid S \in \cF\}$ is strongly symmetric with respect to the group of permutations $\cG$, and $\tilde{\cF}$ be the parallel refinement of $\cF$ obtained using some set $X$. Then, for every element $x \in X$, permutation $\sigma \in \cG$ and set $S \subseteq N \times X$,
\[
    S \in \tilde{\cF}
    \Leftrightarrow
    \{(u, x') \in S \mid x' \neq x\} \cup \{(\sigma(u), x) \mid (u, x) \in S\} \in \tilde{\cF}
    \enspace.
\]
\end{lemma}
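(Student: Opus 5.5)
The plan is to unpack the definition of a parallel refinement, which decomposes membership in $\tilde{\cF}$ copy by copy. Write $S_{x'} \triangleq \{u \mid (u, x') \in S\}$ for the slice of $S$ in copy $x' \in X$. By definition, $S \in \tilde{\cF}$ if and only if $S_{x'} \in \cF$ for every $x' \in X$. Now let $S'$ denote the set on the right-hand side of the lemma. We will compute the slices of $S'$ directly. For every $x' \neq x$, the slice $S'_{x'}$ equals $S_{x'}$, because the first part of $S'$ keeps all pairs of $S$ outside copy $x$ unchanged, and the second part only contributes pairs in copy $x$. For copy $x$ itself, the slice $S'_x$ equals $\{\sigma(u) \mid u \in S_x\} = \sigma(S_x)$.

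Given this, the equivalence reduces to a single claim: for every $T \subseteq N$ and $\sigma \in \cG$, we have $T \in \cF \iff \sigma(T) \in \cF$. We will derive this from condition~(2) of Definition~\ref{def:strong_symmetry}. It suffices to show that $\bE_{\tau \in \cG}[\characteristic_{\tau(\sigma(T))}] = \bE_{\tau \in \cG}[\characteristic_{\tau(T)}]$. This holds because $\cG$ is a group, so the map $\tau \mapsto \tau\sigma$ is a bijection of $\cG$ onto itself. Consequently, if $\tau$ is uniform over $\cG$, then $\tau\sigma$ is also uniform over $\cG$, and the two expectations coincide. Condition~(2) then gives $T \in \cF \iff \sigma(T) \in \cF$.

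Combining the two steps, $S \in \tilde{\cF}$ holds exactly when $S_{x'} \in \cF$ for all $x' \neq x$ and $S_x \in \cF$. This is in turn equivalent to $S'_{x'} \in \cF$ for all $x' \neq x$ and $\sigma(S_x) = S'_x \in \cF$, which is precisely the condition $S' \in \tilde{\cF}$.

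I expect no real obstacle here. The only point needing care is the group-invariance argument in the second step, namely making sure that the composition convention for $\tau\sigma$ matches how permutations act on sets. Either convention works, since both left and right multiplication by $\sigma$ are bijections of $\cG$.
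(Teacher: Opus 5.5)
Your proposal is correct and follows the paper's proof. Both arguments reduce membership in $\tilde{\cF}$ to the per-copy slice condition, note that slices outside copy $x$ are unchanged while the slice in copy $x$ becomes $\sigma(S_x)$, and conclude from closure of $\cF$ under $\cG$; the only addition is that you explicitly derive this closure from condition~(2) of strong symmetry via the bijection $\tau \mapsto \tau\sigma$ of $\cG$, a step the paper takes for granted.
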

\begin{proof}
Let us denote $\sigma(S, x) \triangleq \{(u, x') \in S \mid x' \neq x\} \cup \{(\sigma(u), x) \mid (u, x) \in S\}$.
By the definition of parallel refinement, to prove that the lemma, it suffices to show that, for every $x' \in X$, we have
\[
    \{u \mid (u, x') \in S\} \in \cF
    \Leftrightarrow
    \{u \mid (u, x') \in \sigma(S, x)\} \in \cF
    \enspace.
\]
When $x' \neq x$, this is true since $\{u \mid (u, x') \in S\} = \{u \mid (u, x') \in \sigma(S, x)\}$. When $x' = x$, the same equivalence is still true because
\[
    \{u \mid (u, x) \in \sigma(S, x)\}
    =
    \sigma(\{u \mid (u, x) \in S\})
\]
and $\cF$ is closed under permutations from $\cG$.
\end{proof}

\noindent \textbf{Remark.} The proof of Lemma~\ref{lem:parallel_symmetry} only uses the fact that $\cF$ is symmetric in the sense that it is closed under permutations of $\cG$. The definition of strong symmetry requires $\cF$ to be symmetric in a stronger sense, which is redundant for the purposes of Lemma~\ref{lem:parallel_symmetry} and Theorem~\ref{thm:paralll_hardness} below.

The next lemma shows that parallel refinements preserve the value of the optimal solution.

\begin{lemma} \label{lem:parallel_value_conserving}
Fix a constant $\eps > 0$, a non-negative set function $f\colon 2^\cN \to \nnR$ and a constraint $\cF \subseteq 2^N$. Let $F$ be the multilinear extension of $f$. Then, there exists a large enough value $n_0$ such that in every parallel refinement $\tilde{\cF}$ of $\cF$ obtained using a set $X$ of size at least $n_0$, there is a set $S \in \tilde{\cF}$ obeying
$F(\xi(S)) + \eps \geq \max\{F(\vx) \mid \vx \in P(\cF)\}$, where
    \[
        \xi_u(S)
        =
        \tfrac{1}{|X|}|\{(u, x) \in S \mid x \in X\}|
        \qquad
        \forall\; u \in N
        \enspace.
    \]
\end{lemma}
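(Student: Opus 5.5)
The plan is to realize an (almost) optimal fractional point of $P(\cF)$ by choosing, for every copy $x \in X$, a feasible set of $\cF$, so that the empirical frequencies $\xi(S)$ approximate that point; continuity of $F$ then finishes the argument.

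Let $\vx^* \in \arg\max\{F(\vx) \mid \vx \in P(\cF)\}$; the maximum exists since $P(\cF)$ is compact and $F$ is a polynomial. By Carath\'eodory's theorem, $\vx^* = \sum_{j=1}^{m} \lambda_j \characteristic_{S_j}$ for sets $S_1, \dotsc, S_m \in \cF$ and convex coefficients $\lambda_j \geq 0$ with $\sum_j \lambda_j = 1$, where $m \leq |N|+1$. The number $m$ and the sets $S_j$ depend only on $f$ and $\cF$, not on $X$.

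Given $X$ with $n = |X|$, partition $X$ into disjoint blocks $X_1, \dotsc, X_m$ with $|X_j| = \lfloor \lambda_j n \rfloor$ for $j < m$, and let $X_m$ contain all remaining copies. Define $S = \bigcup_j S_j \times X_j$. For each $x \in X$, the slice $\{u \mid (u,x) \in S\}$ equals $S_j$ for the unique $j$ with $x \in X_j$, so it lies in $\cF$. Hence $S \in \tilde{\cF}$ by the definition of a parallel refinement.

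For each $u \in N$,
\[
\xi_u(S) = \sum_{j} \characteristic[u \in S_j]\,|X_j|/n .
\]
Each ratio $|X_j|/n$ differs from $\lambda_j$ by at most $m/n$, so $\|\xi(S) - \vx^*\|_\infty \leq m^2/n$. Next, $F$ is Lipschitz on $[0,1]^N$: $\partial F/\partial x_u = \bE[f(\RSet(\vx)\cup\{u\}) - f(\RSet(\vx)\setminus\{u\})]$, whose absolute value is at most $M \triangleq \max_T f(T)$. Therefore
\[
|F(\xi(S)) - F(\vx^*)| \leq |N| M m^2/n .
\]
Choosing $n_0 = \lceil |N| M m^2/\eps \rceil$ gives $F(\xi(S)) + \eps \geq F(\vx^*)$ whenever $|X| \geq n_0$.

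The main obstacle is only the rounding bookkeeping: the rounded block sizes must sum exactly to $n$, and the resulting error must stay uniformly small. Assigning all leftover copies to the last block handles this, and this requires the constraint $|X_j| \leq$ something only through the final bound $m^2/n$. No property of $\cF$ beyond $S_j \in \cF$ is used, and the value of $f$ is never evaluated at the refined set itself; only $F$ at $\xi(S)$ is needed, as the statement demands.
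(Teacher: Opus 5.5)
Your proof is correct and follows essentially the same route as the paper: write the maximizer $\vx^*$ as a convex combination of characteristic vectors of sets in $\cF$, discretize the coefficients to multiples of $1/|X|$, assign the corresponding sets to the copies in $X$, and conclude by continuity of $F$. The differences are only technical. You use Carath\'eodory and floor-with-leftover rounding where the paper rounds all $|\cF|$ coefficients up or down while preserving their sum, and you bound $|F(\xi(S)) - F(\vx^*)|$ via $|\partial F/\partial x_u| \leq \max_T f(T)$ rather than by comparing the multilinear expansion term by term, which gives an $n_0$ polynomial in $|N|$ instead of one carrying the factor $2^{|N|}|\cF|$.
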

\begin{proof}
Let $\vx^*$ be a vector in $P(\cF)$ maximizing $F(\vx^*)$. Since $\vx^* \in P(\cF)$, it can be represented as a convex combination of the characteristic vectors of the sets in $\cF$. In other words, there is a non-negative coefficient $\alpha_T$ for every $T \in \cF$ such that the coefficients of all sets add up to $1$ and
\[
    \vx^*
    =
    \sum_{T \in \cF} \alpha_T \cdot \characteristic_T
    \enspace.
\]
By rounding each coefficient $\alpha_T$ either up or down to an integer multiple of $1/|X|$, we can get new non-negative coefficient $\alpha'_T$ for every set $T \in \cF$ such that the coefficients still add up to $1$ and they also obey $|\alpha'_T - \alpha_T| < 1/|X|$ for every $T \in \cF$. Using the coefficients $\alpha'_T$, it is possible to define a function $H \colon X \to \cF$ in which every set $T \in \cF$ is the image of exactly $|X| \cdot \alpha'_T$ different elements of $X$. Consider now the set
\[
    S
    \triangleq
    \bigcup_{x \in X} (H(x) \times \{x\})
    \enspace.
\]
Clearly, this set belongs to $\tilde{\cF}$ because $H(x) \in \cF$ for every $x \in X$. Additionally, for every $u \in N$,
\begin{align} \label{eq:coordiante_diffrence}
    |\xi_u(S) - x^*_u|
    ={} &
    \bigg|\frac{1}{|X|}\sum_{x \in X} \characteristic[u \in H(x)] - \sum_{T \in \cF} \alpha_T \cdot \characteristic[u \in T]\bigg|\\\nonumber
    ={} &
    \bigg|\sum_{T \in \cF} \alpha'_T \cdot \characteristic[u \in T] - \sum_{T \in \cF} \alpha_T \cdot \characteristic[u \in T]\bigg|
    \leq
    \sum_{T \in \cF} |\alpha'_T - \alpha_T|
    \leq
    \frac{|\cF|}{|X|}
    \enspace.
\end{align}
Let us now compare
\[
    F(\vx^*)
    =
    \sum_{T \subseteq \cN}
    \bigg[ f(T) \cdot \prod_{u \in T} x^*_u \cdot \prod_{u \in N \setminus T} \mspace{-9mu} (1 - x^*_u)\bigg]
\]
with
\[
    F(\xi(S))
    =
    \sum_{T \subseteq \cN}
    \bigg[ f(T) \cdot \prod_{u \in T} \xi_u(S) \cdot \prod_{u \in N \setminus T} \mspace{-9mu} (1 - \xi_u(S))\bigg]
    \enspace.
\]
By Inequality~\eqref{eq:coordiante_diffrence}, for every set $T \subseteq \cN$ the coefficient of $f(T)$ in each one of the above expressions can differ by at most $|\cF| \cdot |N|/|X|$. Thus, the difference between $F(\xi(S))$ and $F(\vx^*) = \max\{F(\vx) \mid \vx \in P(\cF)\}$ is at most
\[
    \frac{2^{|N|} \cdot |\cF| \cdot |N| \cdot \max_{T \subseteq \cN} f(T)}{|X|}
    \leq
    \frac{2^{|N|} \cdot |\cF| \cdot |N| \cdot \max_{T \subseteq \cN} f(T)}{n_0}
    \enspace.
\]
The numerator of the last fraction is a positive constant independent of $n_0$. Thus, for a large enough $n_0$, the fraction is upper bounded by $\eps$.
\end{proof}

As explained above, Lemmata~\ref{lem:parallel_symmetry} and~\ref{lem:parallel_value_conserving} establish that parallel refinements have the two properties of standard refinements used in the proof of Theorem~1.8 of~\cite{vondrak2013symmetry}: preservation of symmetry under per-copy application of permutations from $\cG$ (Lemma~\ref{lem:parallel_symmetry}), and preservation of the optimal fractional objective value (Lemma~\ref{lem:parallel_value_conserving}). The interested reader can verify that the remainder of Vondr\'{a}k's proof carries through unchanged under this substitution, yielding the following variant of Theorem~\ref{thm:symmetry_gap}.

\begin{theorem} \label{thm:paralll_hardness}
Consider a non-negative submodular function $f\colon 2^N \to \nnR$ and a collection $\cF \subseteq 2^N$ of feasible sets such that the problem $\max\{f(S) \mid S \in \cF\}$ is strongly symmetric with respect to a group $\cG$ of permutations and has a symmetry gap $\gamma$. Let $\cC$ be the class of the problems $\max\{ \tilde{f}(S) \mid S \in \tilde{\cF}\}$ in which $\tilde{f}$ is a non-negative submodular function and $\tilde{\cF}$ is a
parallel refinement of $\cF$. Then, for every constant $\eps > 0$, any (even randomized) $(1 + \eps)\gamma$-approximation algorithm for the class $\cC$ would require exponentially many value oracle queries to $\tilde{f}(S)$. If $f$ is monotone, then this hardness result holds even if the class $\cC$ is restricted to problems with a monotone objective function $\tilde{f}$.
\end{theorem}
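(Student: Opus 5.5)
The plan is to follow Vondr\'{a}k's proof of Theorem~\ref{thm:symmetry_gap} step by step, replacing refinements by parallel refinements, and to check that each place where a property of refinements is used is covered by Lemma~\ref{lem:parallel_symmetry} or Lemma~\ref{lem:parallel_value_conserving}. Let $X$ be a large ground copy set. For the objective we will use Vondr\'{a}k's two smoothed functions on $N \times X$, defined through the map $\xi$ of Lemma~\ref{lem:parallel_value_conserving}. The first is $\tilde f(S) = \hat F(\xi(S))$ and the second is $\tilde g(S) = \hat G(\xi(S))$. Here $\hat F$ and $\hat G$ are the continuous functions from Vondr\'{a}k's Lemma~3.2. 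They satisfy three properties: they are close to $F$ and to $F(\bar{\vx})$ respectively, they agree whenever $\vx$ is close to $\bar{\vx}$, and their multilinear-type liftings give non-negative submodular (and, when $f$ is monotone, monotone) set functions on $N\times X$. These properties concern only the objective and the projection $\xi$, not the constraint, so they carry over verbatim.

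Next, we hide the structure with a random permutation. In Vondr\'{a}k's argument the instance is randomized by applying, independently for each $x\in X$, a uniformly random $\sigma_x\in\cG$ to the copy $N\times\{x\}$. Lemma~\ref{lem:parallel_symmetry} is exactly what guarantees that this per-copy relabeling maps the parallel refinement $\tilde\cF$ onto itself. Consequently both instances $(\tilde f,\tilde\cF)$ and $(\tilde g,\tilde\cF)$ have the same feasibility structure after randomization, and the algorithm learns nothing from $\tilde\cF$. We then run the standard query argument. For any fixed query $Q$, after random relabeling of the copies, $\xi(Q)$ concentrates, by a Chernoff bound over the independent $\sigma_x$, within distance $\delta$ of its symmetrization $\overline{\xi(Q)}$. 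This holds except with probability $e^{-\Omega(|X|)}$. On such points $\tilde f(Q)=\tilde g(Q)$, so by a union bound an algorithm making subexponentially many queries cannot distinguish the two instances with non-negligible probability.

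It remains to compare the optima. For $\tilde g$, every feasible $S$ has $\xi(S)\in P(\cF)$, since each copy restricted to $S$ lies in $\cF$ and $\xi(S)$ is the average of their characteristic vectors. Hence $\tilde g(S)\le \max_{\vx\in P(\cF)} F(\bar{\vx})+\eps'$, which is $\gamma$ times the fractional optimum up to a small additive error. For $\tilde f$, Lemma~\ref{lem:parallel_value_conserving} yields a feasible $S$ with $F(\xi(S))\ge \max_{\vx \in P(\cF)} F(\vx)-\eps'$, so $\tilde f$ has optimum close to the fractional optimum. An algorithm achieving a $(1+\eps)\gamma$-approximation would therefore distinguish the two instances, which contradicts the query argument.

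The main obstacle I expect is the Chernoff/indistinguishability step. In Vondr\'{a}k's setting the randomization and the concentration are per original element $u$, whereas here the permutations act per copy $x$. I would verify that $\xi_u(\sigma(Q))=\frac{1}{|X|}\sum_x \characteristic[\sigma_x^{-1}(u)\in Q_x]$ is an average of $|X|$ independent bounded variables whose mean equals $\overline{\xi(Q)}_u$. This identity is what makes the concentration bound, and hence the rest of the argument, go through.
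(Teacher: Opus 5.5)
Your proposal is correct and follows the paper's own argument. The paper proves this theorem by rerunning Vondr\'{a}k's proof of Theorem~\ref{thm:symmetry_gap}, using Lemma~\ref{lem:parallel_symmetry} (invariance of $\tilde{\cF}$ under independent per-copy permutations $\sigma_x \in \cG$) and Lemma~\ref{lem:parallel_value_conserving} (a feasible $S$ with $F(\xi(S))$ close to the fractional optimum) in place of the corresponding properties of standard refinements. Your sketch fills in that outline correctly, including the observation that $\xi(S) \in P(\cF)$ for every $S \in \tilde{\cF}$, which bounds the optimum of $\tilde g$.

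The only slip is your closing claim that Vondr\'{a}k's randomization is per original element $u$. That contradicts your own (correct) second paragraph, and the paper explicitly describes his proof as using per-copy application of permutations from $\cG$. So the Hoeffding identity you plan to verify is the same one his argument already relies on, not a new obstacle.
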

Consider now the constraint $\cF$ from Section~\ref{sec:symmetric_instance}. We recall that $\cF$ is a simplified partition matroid constraint. As explained in Section~\ref{sec:symmetric_instance}, standard refinement of such a constraint is a (generalized) partition matroid constraint. However, a parallel refinement of a simplified matroid constraint is another constraint of the same type. Thus, by plugging the guarantees of Corollary~\ref{cor:symmetry_gap} into Theorem~\ref{thm:paralll_hardness}, and observing again that a sub-exponential time algorithm can make only a sub-exponential number of value oracle queries, we get the following lemma.
\begin{lemma}
For every constant $\eps' > 0$, no sub-exponential time algorithm can guarantee $(1 + \eps')(8/17 + \ell^{-1})$-approximation for the problem of maximizing a non-negative submodular function subject to a simplified partition matroid constraint.
\end{lemma}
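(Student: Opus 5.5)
The proof is a direct combination of Corollary~\ref{cor:symmetry_gap} with Theorem~\ref{thm:paralll_hardness}. We apply them to the symmetric instance $\max\{f(S) \mid S \in \cF\}$ of Section~\ref{sec:symmetric_instance}, using the weights $w_1 = w_2 = 6/17$.

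First, we check the hypotheses of Theorem~\ref{thm:paralll_hardness}. By Observation~\ref{obs:f_properties}, $f$ is non-negative and submodular. The problem is strongly symmetric with respect to the group $\cG$ generated by the permutations $\sigma_\pi$ and $\sigma_{\text{s}}$, as noted before Lemma~\ref{lem:symmetric_value}. Only closure of $\cF$ under $\cG$ is actually needed, and that is immediate because every $\sigma \in \cG$ maps $A$ to itself, so $|S \cap A|$ is preserved. By Corollary~\ref{cor:symmetry_gap}, the symmetry gap $\gamma$ satisfies $\gamma \leq 8/17 + \ell^{-1}$. Theorem~\ref{thm:paralll_hardness} then says that any $(1+\eps')\gamma$-approximation algorithm for the class $\cC$ of problems whose constraint is a parallel refinement of $\cF$ needs exponentially many value oracle queries. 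Since $(1+\eps')\gamma \leq (1+\eps')(8/17+\ell^{-1})$, the same lower bound applies to guaranteeing the larger ratio. A sub-exponential time algorithm can make only sub-exponentially many queries, so no such algorithm achieves this guarantee on $\cC$.

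It remains to check that every parallel refinement $\tilde{\cF}$ of $\cF$ is a simplified partition matroid constraint; this is the only step with content, though it is routine. Write $\cF$ as the partition matroid with parts $A$ and the singletons $\{u\}$ for $u \in B \cup C$, all with capacity $1$. By definition, $S \in \tilde{\cF}$ iff for every $x \in X$ the slice $\{u \mid (u,x) \in S\}$ lies in $\cF$. That holds iff $|S \cap (A \times \{x\})| \leq 1$ and $|S \cap \{(u,x)\}| \leq 1$ for every $x \in X$ and $u \in B \cup C$. The sets $A \times \{x\}$ for $x \in X$, together with the singletons $\{(u,x)\}$ for $u \in B \cup C$ and $x \in X$, partition $N \times X$. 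Every capacity is $1$, so $\tilde{\cF}$ is exactly a simplified partition matroid. Hence the class $\cC$ is contained in the class of problems of maximizing a non-negative submodular function subject to a simplified partition matroid constraint. A hardness result for the subclass $\cC$ transfers to this larger class, which gives the lemma.
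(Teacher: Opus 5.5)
Your proposal is correct and follows the paper's argument. You plug the symmetry-gap bound of Corollary~\ref{cor:symmetry_gap} into Theorem~\ref{thm:paralll_hardness}, use that sub-exponential time allows only sub-exponentially many value queries, and observe that every parallel refinement of $\mathcal{F}$ is again a simplified partition matroid; your explicit check of this last point (parts $A \times \{x\}$ and singletons $\{(u,x)\}$ for $u \in B \cup C$, all of capacity $1$) fills in a step the paper only asserts.
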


The result of Theorem~\ref{thm:hardness} for simplified partition matroid constraints now follows by setting $\eps' = \eps/3$ and $\ell = \lceil 3/\eps\rceil$ in the last lemma.

\subsection{Hardness for Cardinality Constraints}

Proving Theorem~\ref{thm:hardness} for cardinality constraints requires us to somewhat modify the symmetric instance from Section~\ref{sec:symmetric_instance}. First, we create $\ell^2$ copies from each element $a_i$. More formally, if we denote $A' = \{a_{i, j} \mid i \in [\ell], j \in [\ell^2]\}$, then the new ground set is $N' = A' \cup B \cup C$. Intuitively, we think of all the elements $a_{i, 1}, a_{i, 2}, \dotsc, a_{i, \ell^2}$ as fractions of the element $a_i$ from the original ground set. Therefore, in the new constraint, we want to allow a feasible set to contain up to $\ell^2$ such fractions, in addition to any elements of the sets $B$ and $C$, which together contain $3\ell$ elements. Since we are aiming to prove a hardness for cardinality constraints, the new constraint should be a cardinality constraint, and thus, we must allow any set of size at most $3\ell + \ell^2$. Formally, the new constraint is
\[
    \cF' = \{S \subseteq N' \mid |S| \leq 3\ell + \ell^2\}
    \enspace.
\]
To complete the definition of the modified instance, we still need to define its objective function $f'$. This is done using the above intuition that $a_{i, 1}, a_{i, 2}, \dotsc, a_{i, \ell^2}$ are fractions of the element $a_i$. Specifically, given a set $S \subseteq \cN'$, let us define the vector $\xi(S) \in [0, 1]^N$ by
\[
    \xi_u(S)
    =
    \begin{cases}
        \tfrac{1}{\ell^2}|S \cap \{a_{i, j} \mid j \in [\ell^2]\}| & \text{if $u = a_i$} \enspace, \\
        \characteristic[u \in S] & \text{if $u \in N \setminus A$}
    \end{cases}
    \qquad
    \forall\; u \in N \enspace.
\]
Then, $f'(S) = F(\xi(S))$, where $F$ is the multilinear extension of the function $f$ from Section~\ref{sec:symmetric_instance}.

\begin{lemma} \label{obs:f_prime_properties}
The function $f'$ is non-negative and submodular.
\end{lemma}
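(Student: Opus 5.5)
Non-negativity is immediate: $F(\xi(S))$ is an expectation of values of the non-negative function $f$ (Observation~\ref{obs:f_properties}). For submodularity, we use the diminishing-returns characterization and compare marginals. Fix $S \subseteq T \subseteq N'$ and $u \in N' \setminus T$. Adding $u$ changes exactly one coordinate of $\xi$. If $u \in B \cup C$, that coordinate goes from $0$ to $1$. If $u = a_{i,j}$, the coordinate of $a_i$ increases by $1/\ell^2$. In both cases $\xi(S) \leq \xi(T)$ coordinatewise.

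Write $v \in N$ for the coordinate of $\xi$ that changes and $\delta > 0$ for its increase. The increase is the same for $S$ and $T$: $\delta = 1$ if $v \in B \cup C$, and $\delta = 1/\ell^2$ if $v = a_i$. This holds because $u \notin T \supseteq S$, so in the second case neither set already contains $a_{i,j}$. We must show
\[
F(\xi(S) + \delta \characteristic_{\{v\}}) - F(\xi(S)) \geq F(\xi(T) + \delta \characteristic_{\{v\}}) - F(\xi(T)) \enspace.
\]

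Since $F$ is multilinear, it is affine in each coordinate. Hence $F(\vy + \delta\characteristic_{\{v\}}) - F(\vy) = \delta \cdot \partial_v F(\vy)$, and $\partial_v F(\vy)$ does not depend on $y_v$. By the multilinear formula,
\[
\partial_v F(\vy) = \bE\big[f(\RSet(\vy) \cup \{v\}) - f(\RSet(\vy) \setminus \{v\})\big] \enspace.
\]

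It remains to show that $\partial_v F$ is non-increasing in every other coordinate $w \neq v$. Differentiating once more, the mixed partial is
\[
\partial_w \partial_v F(\vy) = \bE\big[f(R + v + w) - f(R + v) - f(R + w) + f(R)\big] \leq 0 \enspace,
\]
where $R = \RSet(\vy) \setminus \{v, w\}$. Each term inside the expectation is non-positive by submodularity of $f$ (Observation~\ref{obs:f_properties}).

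Because the vectors $\xi(S) \leq \xi(T)$ lie in the box $[0,1]^N$, we can move from $\xi(S)$ to $\xi(T)$ one coordinate at a time. Along each such step $\partial_v F$ does not increase, since no step touches coordinate $v$ (and even if $\xi(S)_v \neq \xi(T)_v$, $\partial_v F$ is independent of that coordinate anyway). Multiplying by the common $\delta > 0$ gives the desired inequality.

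The only step needing care is the marginal-equality point. One has to check that the increment $\delta$ is identical for $S$ and $T$, and that the independence of $\partial_v F$ from $y_v$ makes the differing $v$-coordinates of $\xi(S)$ and $\xi(T)$ harmless. Both are direct from the definitions, so there is no real obstacle beyond this bookkeeping.
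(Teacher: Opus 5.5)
Your proof is correct. It shares the paper's skeleton: both reduce the marginal of $f'$ to an expected marginal of $f$ at a random set drawn according to $\xi(S)$ or $\xi(T)$, which in your language is $\delta\cdot\partial_v F$ with $\partial_v F(\vy)=\bE[f(\RSet(\vy)\cup\{v\})-f(\RSet(\vy)\setminus\{v\})]$, and then show this quantity is no larger at $\xi(T)$. The difference is in how that comparison is made.

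The paper couples the two random sets so that $\RSet(\xi(S))\subseteq\RSet(\xi(T))$ always holds. It then applies the diminishing-returns form of submodularity of $f$ realization by realization, removing $a_i$ from both sets to neutralize a possible difference in the $a_i$-coordinate. This compares the two points in one shot, without calculus.

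You instead prove $\partial_w\partial_v F\leq 0$ for $w\neq v$ and walk from $\xi(S)$ to $\xi(T)$ one coordinate at a time. The fact that $\partial_v F$ does not depend on $y_v$ absorbs a possible difference in the $v$-coordinate, which arises when $T$ contains more copies $a_{i,j'}$ than $S$.

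Your route has the minor advantage of handling $u\in B\cup C$ and $u\in A'$ uniformly through the common increment $\delta$, whereas the paper treats them as two separate cases. The paper's coupling is a bit more direct. Both ultimately rest on the same fact: submodularity of $f$ applied inside the expectation.
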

\begin{proof}
The non-negativity of $f'$ follows from the non-negativity of $F$ (which in its turn follows from the non-negativity of $f$). To see that $f'$ is also submodular, consider arbitrary two sets $S \subseteq T \subseteq N'$ and an element $u \in N' \setminus T$. Since $S \subseteq T$, $\xi(S) \leq \xi(T)$, and thus, it is possible to correlate (for analysis purposes) the distributions of $\RSet(\xi(S))$ and $\RSet(\xi(T))$ so that it always holds that $\RSet(\xi(S)) \subseteq \RSet(\xi(T))$. If $u \not \in A'$, this immediately gives us
\begin{align*}
    f'(S \cup \{u\}) - f'(S)
    ={} &
    F(\xi(S \cup \{u\})) - F(\xi(S))
    =
    \bE[f(\RSet(\xi(S \cup \{u\}))) - f(\RSet(\xi(S)))]\\
    ={} &
    \bE[f(\RSet(\xi(S)) \cup \{u\}) - f(\RSet(\xi(S)))]
    \geq
    \bE[f(\RSet(\xi(T)) \cup \{u\}) - f(\RSet(\xi(T)))]\\
    ={} &
    \bE[f(\RSet(\xi(T \cup \{u\}))) - f(\RSet(\xi(T)))]
    =
    F(\xi(T \cup \{u\})) - F(\xi(T))\\
    ={} &
    f'(T \cup \{u\}) - f'(T)
    \enspace,
\end{align*}
where the inequality follows from the submodularity of $f$ since the fact that $u \in N' \setminus (A' \cup T)$ implies that $u \not \in \RSet(\xi(T))$.

To complete the proof that $f'$ is submodular, it remains to show that the same inequality holds also when $u = a_{i, j}$ for some $i$ and $j$. Let us denote by $\vee$ and $\wedge$ the vector coordinate-wise maximum and minimum operations, respectively. Since the multilinear extension is a multilinear function (as its name suggests), we get for $u = a_{i, j}$ the inequality
\begin{align*}
    f'(S \cup \{u\}) - f'&(S)
    =
    F(\xi(S \cup \{u\})) - F(\xi(S))
    =
    F(\xi(S) + \ell^{-2} \cdot \characteristic_{\{a_i\}}) - F(\xi(S))\\
    ={} &
    \ell^{-2} \cdot [F(\xi(S) \vee \characteristic_{\{a_i\}}) - F(\xi(S) \wedge \characteristic_{N \setminus \{a_i\}}) ]
    =
    \ell^{-2} \cdot \bE[f(\RSet(\xi(S)) \cup \{a_i\}) - f(\RSet(\xi(S)) \setminus \{a_i\})]\\
    \geq{} &
    \ell^{-2} \cdot \bE[f(\RSet(\xi(T)) \cup \{a_i\}) - f(\RSet(\xi(T)) \setminus \{a_i\})]
    =
    \ell^{-2} \cdot [F(\xi(T) \vee \characteristic_{\{a_i\}}) - F(\xi(T) \wedge \characteristic_{N \setminus \{a_i\}}) ]\\
    ={} &
    F(\xi(T) + \ell^{-2} \cdot \characteristic_{\{a_i\}}) - F(\xi(T))
    =
    F(\xi(T \cup \{u\})) - F(\xi(T))
    =
    f'(T \cup \{u\}) - f'(T)
    \enspace.
    \qedhere
\end{align*}
\end{proof}

We now need to define an equivalent $O'$ to the set $O = \{a_1, c_{1, 1}\} \cup \{b_i, c_{i,1} \mid 2 \leq i \leq \ell\}$  from Section~\ref{sec:symmetric_instance}. Following to the above intuition, $O'$ is obtained from $O$ by replacing $a_1$ with the $\ell^2$ elements $\{a_{1, 1}, a_{1, 2}, \dotsc, a_{1, \ell^2}\}$. Notice that $O'$ is a feasible solution as it contains $\ell^2 + 2(\ell - 1) + 1 \leq \ell^2 + 3\ell$ elements.
\begin{observation} \label{obs:opt_value_cardinality}
It holds that $f'(O') = f(O) \geq \ell$.
\end{observation}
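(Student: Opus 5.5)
The plan is to compute $\xi(O')$ directly and observe that it equals the characteristic vector $\characteristic_O$ of the set $O$ from Observation~\ref{obs:opt_value}. The rest is a short chain of equalities.

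First I handle the coordinates of the elements $a_i$. Since $O'$ contains all $\ell^2$ copies $a_{1,1}, \dotsc, a_{1,\ell^2}$, we get $\xi_{a_1}(O') = \ell^{-2} \cdot \ell^2 = 1$. For every $i \geq 2$, $O'$ contains no copy $a_{i,j}$, so $\xi_{a_i}(O') = 0$.

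Next I handle the remaining coordinates. For every $u \in N \setminus A = B \cup C$, the definition gives $\xi_u(O') = \characteristic[u \in O']$. The sets $O'$ and $O$ agree on $B \cup C$, because $O'$ was obtained from $O$ only by replacing $a_1$. Hence $\xi(O') = \characteristic_O$, which is an integral vector.

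It then remains to use that the multilinear extension agrees with $f$ at integral points. At $\vx = \characteristic_O$, the set $\RSet(\vx)$ equals $O$ deterministically; equivalently, in the defining sum only the term $T = O$ has a nonzero product, and that product is $1$. Therefore
\[
f'(O') = F(\xi(O')) = F(\characteristic_O) = f(O) \geq \ell,
\]
where the last inequality is Observation~\ref{obs:opt_value}.

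None of these steps is a real obstacle. The only point needing care is checking that the copies of $a_1$ make the $a_1$ coordinate exactly $1$ rather than merely fractional, since that is what makes $\xi(O')$ integral.
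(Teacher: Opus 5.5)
Your proof is correct and follows the same argument as the paper, which also notes that $\xi(O') = \characteristic_O$ and then applies Observation~\ref{obs:opt_value}. Your coordinate-by-coordinate check of $\xi(O')$, and your remark that $F$ agrees with $f$ at integral points, spell out steps the paper leaves implicit.
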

\begin{proof}
The equality holds since $\xi(O') = \characteristic_O$, and the inequality follows from Observation~\ref{obs:opt_value}.
\end{proof}

Let us now identify a group $\cG'$ of permutations with respect to which the instance $\max\{f'(S) \mid S \in \cF'\}$ is strongly symmetric. This is the group formed by taking the closure of the following permutations over $N'$.
\begin{itemize}
    \item Any permutation reordering the indices $i$. Formally, for every permutation $\pi$ over $[\ell]$, there is such a permutation $\sigma'_\pi$ over $N$ given by
    \[
        \sigma'_\pi(u)
        =
        \begin{cases}
            a_{\pi(i), j} & \text{if $u = a_{i, j}$} \enspace,\\
            b_{\pi(i)} & \text{if $u = b_i$} \enspace,\\
            c_{\pi(i), j} & \text{if $u = c_{i, j}$}
        \end{cases}
        \qquad
        \forall\; u \in N'
        \enspace.
    \]
    \item The permutation $\sigma'_{\text{s}}$ that swaps the pairs of elements in $C$ with each other. More formally,
    \[
        \sigma'_{\text{s}}(u)
        =
        \begin{cases}
            a_{i,j} & \text{if $u = a_{i,j}$} \enspace,\\
            b_{i} & \text{if $u = b_i$} \enspace,\\
            c_{i, 3 - j} & \text{if $u = c_{i, j}$}
        \end{cases}
        \qquad
        \forall\; u \in N'
        \enspace.
    \]
\end{itemize}

Notice that the permutations defining $\cG'$ are very similar to the ones defining $\cG$. This similarity underlies the following observation.

\begin{observation} \label{obs:symmetry_xi_swap}
Assume the definition of $\xi$ is extended to vectors $\vx \in [0, 1]^{N'}$ in the natural way, i.e.,
\[
    \xi_u(\vx)
    =
    \begin{cases}
        \tfrac{1}{\ell^2} \sum_{j = 1}^{\ell^2} x_{a_{i, j}}  & \text{if $u = a_i$} \enspace, \\
        x_u & \text{if $u \in N \setminus A$}
    \end{cases}
    \qquad
    \forall\; u \in N \enspace.
\]
Then, for every vector $\vx \in [0, 1]^{N'}$, it holds that
\begin{itemize}
    \item $\xi(\sigma'_\pi(\vx)) = \sigma_\pi(\xi(\vx))$ for every permutation $\pi$ over $[\ell]$, and
    \item $\xi(\sigma'_{\text{s}}(\vx)) = \sigma_{\text{s}}(\xi(\vx))$.
\end{itemize}
Furthermore, these equalities imply $\xi(\bE_{\sigma \in \cG'}[\sigma(\vx)]) = \bE_{\sigma \in \cG}[\sigma(\xi(\vx))]$.
\end{observation}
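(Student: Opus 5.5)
The plan is to verify the two commutation identities coordinate-wise, and then get the final identity from linearity of $\xi$ together with an explicit description of $\cG$ and $\cG'$ as isomorphic groups.

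\textbf{The two identities.} Here $\sigma(\vx)$ means the vector with $\sigma(\vx)_{\sigma(u)} = x_u$.

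For $\sigma'_\pi$, consider first a coordinate $a_{\pi(i)}$. We get
\[
\xi_{a_{\pi(i)}}(\sigma'_\pi(\vx)) = \tfrac{1}{\ell^2}\sum_{j=1}^{\ell^2} \sigma'_\pi(\vx)_{a_{\pi(i),j}} = \tfrac{1}{\ell^2}\sum_{j=1}^{\ell^2} x_{a_{i,j}} = \xi_{a_i}(\vx) = \sigma_\pi(\xi(\vx))_{a_{\pi(i)}}.
\]
The coordinates $b_{\pi(i)}$ and $c_{\pi(i),j}$ are checked the same way. Since $\xi$ is the identity there, these cases reduce to the defining formulas of $\sigma'_\pi$ and $\sigma_\pi$.

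For $\sigma'_{\text{s}}$, the $A'$ and $B$ coordinates are untouched, so their $\xi$-values agree with those of $\sigma_{\text{s}}(\xi(\vx))$. On $C$ both permutations swap $c_{i,1}$ and $c_{i,2}$, so these coordinates agree as well.

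\textbf{The averaged identity.} The identities extend to compositions. If $\xi\circ\sigma'=\sigma\circ\xi$ and $\xi\circ\tau'=\tau\circ\xi$, then $\xi\circ\sigma'\tau' = \sigma\tau\circ\xi$. So there is a map $\phi\colon\cG'\to\cG$ with $\xi\circ\sigma' = \phi(\sigma')\circ\xi$, and it sends generators to generators.

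The step needing care is that averaging over $\cG'$ corresponds to averaging over $\cG$ with the uniform measure. For this I describe both groups explicitly. The $\sigma_\pi$ form a copy of $S_\ell$, and $\sigma_{\text{s}}$ commutes with every $\sigma_\pi$ and has order $2$. Hence $\cG = \{\sigma_\pi,\ \sigma_\pi\sigma_{\text{s}} \mid \pi\}$, and these $2\,\ell!$ elements are distinct: two of them with different $\pi$ differ on $A$, and the two with the same $\pi$ differ on $C$. The same description and argument give $\cG'$, again with $2\,\ell!$ distinct elements. Thus $\phi(\sigma'_\pi) = \sigma_\pi$ and $\phi(\sigma'_\pi\sigma'_{\text{s}}) = \sigma_\pi\sigma_{\text{s}}$ is a bijection. (If one prefers not to rely on explicit distinctness, it suffices that $\phi$ is a surjective homomorphism, since then every fiber has size $|\ker\phi|$ and the pushforward of the uniform measure is uniform.)

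Finally, $\xi$ is linear, so it commutes with expectation:
\[
\xi\Big(\bE_{\sigma'\in\cG'}[\sigma'(\vx)]\Big) = \bE_{\sigma'\in\cG'}[\phi(\sigma')(\xi(\vx))] = \bE_{\sigma\in\cG}[\sigma(\xi(\vx))],
\]
where the last equality uses that $\phi$ pushes the uniform distribution on $\cG'$ to the uniform distribution on $\cG$.
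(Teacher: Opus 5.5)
Your proof is correct. It follows the route the paper leaves implicit, since the paper states this observation without proof: check the intertwining identities coordinatewise on the generators, then use linearity of $\xi$ to pass to the group averages. Your explicit description $\cG=\{\sigma_\pi,\sigma_\pi\sigma_{\text{s}}\}$ and $\cG'=\{\sigma'_\pi,\sigma'_\pi\sigma'_{\text{s}}\}$, together with the bijection $\sigma'_\pi\mapsto\sigma_\pi$, $\sigma'_\pi\sigma'_{\text{s}}\mapsto\sigma_\pi\sigma_{\text{s}}$, is exactly what justifies the uniform-averaging step that the paper compresses into ``these equalities imply''. One small point: your parenthetical fallback also needs $\phi$ to be well defined, which holds because $\xi$ is surjective, so $\tau\circ\xi=\tau'\circ\xi$ forces $\tau=\tau'$.
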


Using the last observation, we can prove the following bound.

\begin{lemma} \label{lem:symmetric_value_cardinality}
Let $F'$ be the multi-linear extension of $f'$, and for every vector $\vx \in P(\cF')$, let $\bar{\vx} = \bE_{\sigma \in \cG'}[\sigma(\vx)]$. Then,
\[
    F'(\bar{\vx})
    \leq
    (\ell + 3) \cdot \max\bigg\{1 - w_1 - w_2/2, \frac{(w_1 + w_2)^2}{w_1 + 2w_2}\bigg\} + 1 + 3/\ell
    \enspace.
\]
\end{lemma}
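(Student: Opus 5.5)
The plan is to reduce $F'(\bar{\vx})$ to the multilinear extension $F$ of the original function $f$, evaluated at a symmetric vector. Then I would rerun the argument of Lemma~\ref{lem:symmetric_value}, with the constraint $a\ell \leq 1$ relaxed to $a\ell \leq 1 + 3/\ell$.

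\textbf{Step 1: $F'(\vy) = F(\xi(\vy))$ for every $\vy \in [0,1]^{N'}$.} By definition, $F'(\vy) = \bE[f'(\RSet(\vy))] = \bE[F(\xi(\RSet(\vy)))]$.
\begin{itemize}
\item $F$ is a multilinear polynomial in the coordinates of $N$, so it is a sum of monomials, each a product of distinct coordinates $\xi_u$.
\item Each coordinate $\xi_u(\RSet(\vy))$ depends on its own block of elements of $N'$. These blocks are $\{a_{i,1},\dotsc,a_{i,\ell^2}\}$ for $u = a_i$ and $\{u\}$ otherwise, and they are disjoint.
\item Hence the coordinates $\xi_u(\RSet(\vy))$ are independent, and the expectation of each monomial factorizes.
\item Each $\xi_u(\RSet(\vy))$ is an average of indicators (or a single indicator), so $\bE[\xi_u(\RSet(\vy))] = \xi_u(\vy)$.
\end{itemize}
Therefore $F'(\vy) = F(\xi(\vy))$. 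Applying this with $\vy = \bar{\vx}$ and using Observation~\ref{obs:symmetry_xi_swap} gives
\[
F'(\bar{\vx}) = F\big(\bE_{\sigma\in\cG}[\sigma(\xi(\vx))]\big).
\]
The vector $\bar{\vz} \triangleq \bE_{\sigma\in\cG}[\sigma(\xi(\vx))]$ is $\cG$-symmetric. Exactly as in Lemma~\ref{lem:symmetric_value}, it has common values $a$, $b$, $c$ on $A$, $B$ and $C$, respectively.

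\textbf{Step 2: the replacement for $a\ell \leq 1$.} Since $\vx \in P(\cF')$, we have $\sum_{u\in N'} x_u \leq \ell^2 + 3\ell$, so $\sum_{i,j} x_{a_{i,j}} \leq \ell^2 + 3\ell$. Consequently
\[
\sum_i \xi_{a_i}(\vx) \leq 1 + 3/\ell ,
\]
and this sum is preserved by symmetrization, so $a\ell \leq 1 + 3/\ell$. All coordinates of $\bar{\vz}$ still lie in $[0,1]$.

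\textbf{Step 3: redo the bound from Lemma~\ref{lem:symmetric_value}.} We again have
\[
F(\bar{\vz}) = \ell^2(1-w_1-w_2)\,a(1-b) + \ell w_1 (a+b-ab)(1-c^2) + 2\ell w_2 c(1-c).
\]
\begin{itemize}
\item First term: using $\ell^2 a \leq \ell + 3$, it is at most $(\ell+3)(1-w_1-w_2)(1-b)$.
\item Second term: write $a + b - ab = a(1-b) + b$. Then $\ell(a+b-ab) \leq (1+3/\ell)(1-b) + \ell b$, so the second term is at most $w_1\big((1+3/\ell)(1-b) + \ell b\big)(1-c^2)$.
\end{itemize}
The resulting upper bound is linear in $b$, so it suffices to check $b \in \{0,1\}$.
\begin{itemize}
\item At $b=0$: use $c(1-c) \leq 1/4$ and $w_1(1-c^2) \leq 1$. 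The bound becomes at most $(\ell+3)(1-w_1-w_2) + (1+3/\ell) + \ell w_2/2$, which is at most $(\ell+3)(1-w_1-w_2/2) + 1 + 3/\ell$.
\item At $b=1$: the expression is exactly the one in Lemma~\ref{lem:symmetric_value}. Hence it is at most $\ell(w_1+w_2)^2/(w_1+2w_2)$, which is at most $(\ell+3)$ times this quantity.
\end{itemize}
Taking the maximum of the two cases yields the claimed bound.

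\textbf{Main obstacle.} The step needing the most care is Step~1, specifically the identity $F' = F \circ \xi$ on fractional vectors. It relies on independence across the disjoint blocks, together with the fact that $F$ is multilinear (no squared coordinates). Once that identity holds, Steps~2 and~3 are a routine adaptation of Lemma~\ref{lem:symmetric_value}.
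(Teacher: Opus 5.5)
Your proof is correct. Steps~1 and~2 are essentially the paper's: the identity $F' = F\circ\xi$ via independence across disjoint blocks, Observation~\ref{obs:symmetry_xi_swap}, and $\sum_i \xi_{a_i}(\vx) \le 1 + 3/\ell$. The difference is in Step~3. The paper never reopens the formula for $F$. It observes that $\frac{1}{1+3/\ell}\,\xi(\vx) \in P(\cF)$. It then uses that the multilinear extension is concave along the non-negative direction $\bar{\vz}$, together with $F(\mathbf{0}) = f(\varnothing) \ge 0$, to get $F(\bar{\vz}) \le (1+3/\ell)\,F\bigl(\bar{\vz}/(1+3/\ell)\bigr)$, and invokes Lemma~\ref{lem:symmetric_value} as a black box. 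You instead re-derive the explicit bound under the relaxed constraint $a\ell \le 1 + 3/\ell$, bounding the $a$-dependent terms directly and using linearity in $b$. The paper's route is modular: it needs only non-negativity and submodularity of $f$, and would apply to any symmetric instance whose constraint is violated only by a scaling factor. Yours is self-contained, since no concavity fact is needed, but it is tied to the specific form of $f$. As a small bonus it gives a marginally sharper intermediate bound, because the $b = 1$ branch stays at $\ell(w_1+w_2)^2/(w_1+2w_2)$ without the factor $1 + 3/\ell$. Either bound suffices for Corollary~\ref{cor:symmetry_gap_cardinality}.
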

\begin{proof}
Observe first that
\begin{align*}
    F'(\bar{\vx})
    ={} &
    \bE[f'(\RSet(\bar{\vx}))]
    =
    \bE[F(\xi(\RSet(\bar{\vx})))]
    =
    \bE[f(\RSet(\xi(\RSet(\bar{\vx}))))]\\
    ={} &
    \bE[f(\RSet(\xi(\bar{\vx})))]
    =
    F(\xi(\bar{\vx}))
    =
    F(\xi(\bE_{\sigma \in \cG'}[\sigma(\vx)]))
    =
    F(\bE_{\sigma \in \cG}[\sigma(\xi(\vx)]))
    \enspace,
\end{align*}
where the fourth equality holds because both $\RSet(\xi(\RSet(\bar{x})))$ and $\RSet(\xi(\bar{\vx}))$ are product distributions, and for every $i \in [\ell]$,
\[
    \Pr[a_i \in \RSet(\xi(\RSet(\bar{\vx}))]
    =
    \frac{1}{\ell^2} \cdot \sum_{j = 1}^{\ell^2} \bar{x}_{i, j}
    =
    \Pr[a_i \in \RSet(\xi(\bar{\vx})]
    \enspace.
\]

One cannot use Lemma~\ref{lem:symmetric_value} to bound the rightmost side of the last inequality because the vector $\xi(\vx)$ might not belong to $P(\cF)$ since the value of $\sum_{i \in [\ell]} \xi_{a_i}(\vx)$ is
\[
    \frac{\sum_{i = 1}^\ell \sum_{j = 1}^{\ell^2} x_{i, j}}{\ell^2}
    \leq
    \frac{\ell^2 + 3\ell}{\ell^2}
    =
    1 + \frac{3}{\ell}
    \enspace,
\]
which may be larger than $1$. However, this calculation shows that $\frac{1}{1 + 3/\ell} \cdot \xi(\vx)$ is guaranteed to belong to $\cP(\cF)$. Therefore,
\begin{align*}
    F'(\bar{\vx})
    ={} &
    F(\bE_{\sigma \in \cG}[\sigma(\xi(\vx)]))
    \leq
    (1 + 3/\ell) \cdot F\bigg(\frac{1}{1 + 3/\ell} \cdot \bE_{\sigma \in \cG}[\sigma(\xi(\vx)])\bigg)\\
    ={} &
    (1 + 3/\ell) \cdot F\bigg(\bE_{\sigma \in \cG}\bigg[\sigma\bigg(\frac{1}{1 + 3/\ell} \cdot  \xi(\vx)\bigg]\bigg)\bigg)\\
    \leq{} &
    (1 + 3/\ell) \cdot \bigg\{\ell \cdot \max\bigg\{1 - w_1 - w_2/2, \frac{(w_1 + w_2)^2}{w_1 + 2w_2}\bigg\} + 1\bigg\}\\
    ={} &
    (\ell + 3) \cdot \max\bigg\{1 - w_1 - w_2/2, \frac{(w_1 + w_2)^2}{w_1 + 2w_2}\bigg\} + 1 + 3/\ell
    \enspace,
\end{align*}
where the first inequality holds because $F$, as the multilinear extension of a submodular function, is concave along any non-negative direction~\cite{calinescu2011maximizing}; and the second inequality follows from Lemma~\ref{lem:symmetric_value} because $\frac{1}{1 + 3/\ell} \cdot \xi(\vx) \in \cP(\cF)$.
\end{proof}

\begin{corollary} \label{cor:symmetry_gap_cardinality}
The symmetry gap of the problem $\max\{f'(S) \mid S \in \cF'\}$ is at most $(1 + 3/\ell) \cdot [\max\big\{1 - w_1 - w_2/2, \frac{(w_1 + w_2)^2}{w_1 + 2w_2}\big\} + \ell^{-1}]$. In particular, for $w_1 = w_2 = 6/17$, the above symmetry gap is at most $8/17 + O(\ell^{-1})$.
\end{corollary}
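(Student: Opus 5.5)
The proof will follow the same route as Corollary~\ref{cor:symmetry_gap}: divide the upper bound on symmetric solutions from Lemma~\ref{lem:symmetric_value_cardinality} by the lower bound on the optimal solution from Observation~\ref{obs:opt_value_cardinality}. By definition, the symmetry gap of $\max\{f'(S) \mid S \in \cF'\}$ is $\max\{F'(\bar{\vx}) \mid \vx \in P(\cF')\}$ divided by $\max\{F'(\vx) \mid \vx \in P(\cF')\}$.

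For the denominator, $O' \in \cF'$ because $|O'| \leq \ell^2 + 3\ell$. Hence $\characteristic_{O'} \in P(\cF')$, and since $F'$ agrees with $f'$ on integral points, the denominator is at least $F'(\characteristic_{O'}) = f'(O') \geq \ell$ by Observation~\ref{obs:opt_value_cardinality}.

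For the numerator, write $M \triangleq \max\{1 - w_1 - w_2/2, \frac{(w_1 + w_2)^2}{w_1 + 2w_2}\}$. Lemma~\ref{lem:symmetric_value_cardinality} bounds $F'(\bar{\vx})$ by $(\ell + 3)M + 1 + 3/\ell$ for every $\vx \in P(\cF')$. This expression factors as $(1 + 3/\ell)(\ell M + 1)$. Dividing by $\ell$ then gives a gap of at most $(1 + 3/\ell)(M + \ell^{-1})$, which is the first claim.

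For the specific weights $w_1 = w_2 = 6/17$, I will reuse the computation behind Corollary~\ref{cor:symmetry_gap}. Here $1 - w_1 - w_2/2 = 1 - 6/17 - 3/17 = 8/17$, and $\frac{(12/17)^2}{18/17} = \frac{144}{17 \cdot 18} = 8/17$, so $M = 8/17$. The bound becomes $(1 + 3/\ell)(8/17 + \ell^{-1}) = 8/17 + \ell^{-1} + \frac{24}{17\ell} + 3\ell^{-2}$, which is $8/17 + O(\ell^{-1})$ since $\ell \geq 1$.

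There is no real obstacle in this argument. The one point to get right is the factorization $(\ell + 3)M + 1 + 3/\ell = (1 + 3/\ell)(\ell M + 1)$, so that the $(1 + 3/\ell)$ factor comes out exactly as stated.
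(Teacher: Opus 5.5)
Your proposal is correct and takes the same approach as the paper. The paper likewise divides the bound of Lemma~\ref{lem:symmetric_value_cardinality} by $f'(O') \geq \ell$ and factors out $(1+3/\ell)$. Your explicit check that the maximum equals $8/17$ for $w_1 = w_2 = 6/17$, and your expansion of the $O(\ell^{-1})$ term, just spell out steps the paper leaves implicit.
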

\begin{proof}
By Observation~\ref{obs:opt_value_cardinality} and Lemma~\ref{lem:symmetric_value_cardinality}, the symmetry gap of $\{f'(S) \mid S \in \cF'\}$ is
\begin{align*}
    \frac{\max\{F'(\bar{\vx}) \mid \vx \in P(\cF')\}}{\max\{F'(\vx) \mid \vx \in P(\cF')\}}
    \leq{} &
    \frac{(\ell + 3) \cdot \max\big\{1 - w_1 - w_2/2, \frac{(w_1 + w_2)^2}{w_1 + 2w_2}\big\} + 1 + 3/\ell}{f'(O')}\\
    \leq{} &
    (1 + 3/\ell) \cdot \bigg[\max\bigg\{1 - w_1 - w_2/2, \frac{(w_1 + w_2)^2}{w_1 + 2w_2}\bigg\} + \ell^{-1}\bigg]
    \enspace.
    \qedhere
\end{align*}
\end{proof}

Plugging the guarantee of Corollary~\ref{cor:symmetry_gap_cardinality} into Theorem~\ref{thm:symmetry_gap}, we get the following lemma.
\begin{lemma}
For every constant $\eps' > 0$, no sub-exponential time algorithm can guarantee $(1 + \eps')(8/17 + O(\ell^{-1}))$-approximation for the problem of maximizing a non-negative submodular function subject to a constraint $\tilde{\cF}$ that is a refinement of $\cF'$.
\end{lemma}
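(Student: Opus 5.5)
This lemma follows by applying Theorem~\ref{thm:symmetry_gap} to the instance $\max\{f'(S) \mid S \in \cF'\}$, using Corollary~\ref{cor:symmetry_gap_cardinality} as the bound $\gamma$ on its symmetry gap. The argument mirrors the general-matroid case: we check the hypotheses of Theorem~\ref{thm:symmetry_gap}, plug in the gap bound, and convert query complexity into running time.

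\textbf{Hypotheses.} Lemma~\ref{obs:f_prime_properties} already gives that $f'$ is non-negative and submodular. What remains is strong symmetry with respect to $\cG'$.

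For condition~(1), every generator of $\cG'$ is a relabeling. By Observation~\ref{obs:symmetry_xi_swap} applied to characteristic vectors, $\xi(\sigma'(S)) = \sigma(\xi(S))$ for the corresponding generator $\sigma \in \cG$. Since $f$ is invariant under $\cG$, its multilinear extension $F$ is invariant under the induced coordinate permutations. Hence $f'(\sigma'(S)) = F(\sigma(\xi(S))) = F(\xi(S)) = f'(S)$. This extends from the generators to the whole group by closure.

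For condition~(2), every $\sigma \in \cG'$ is a permutation of $N'$, so $\sum_{u} (\bE_{\sigma}[\characteristic_{\sigma(S)}])_u = |S|$. Thus two sets with equal symmetrized vectors have equal cardinality. Since $\cF'$ depends only on cardinality, they are either both feasible or both infeasible.

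\textbf{Applying the theorem.} Take $\gamma \le (1+3/\ell)\big[\max\{1 - w_1 - w_2/2, \frac{(w_1+w_2)^2}{w_1+2w_2}\} + \ell^{-1}\big]$ with $w_1 = w_2 = 6/17$. A direct check gives $1 - 6/17 - 3/17 = 8/17$ and $(12/17)^2/(18/17) = 8/17$, so $\gamma \le (1+3/\ell)(8/17 + \ell^{-1}) = 8/17 + O(\ell^{-1})$.

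Theorem~\ref{thm:symmetry_gap} then says that for every constant $\eps' > 0$, any $(1+\eps')\gamma$-approximation for problems whose constraint is a refinement of $\cF'$ requires exponentially many value oracle queries. A sub-exponential time algorithm makes only sub-exponentially many queries. Moreover, the approximation ratio is monotone in the target, so ruling out $(1+\eps')\gamma$ also rules out $(1+\eps')(8/17 + O(\ell^{-1}))$ with the same hidden constant. This proves the lemma.

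\textbf{Main obstacle.} The only step needing care is verifying strong symmetry, specifically the invariance of $f'$. It relies on the identity $\xi \circ \sigma' = \sigma \circ \xi$ from Observation~\ref{obs:symmetry_xi_swap} together with the invariance of $F$ under $\cG$, which follows term by term from the invariance of $f$. Everything else is direct substitution into earlier results.
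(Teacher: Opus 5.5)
Your proposal is correct and follows the paper's route. The paper obtains the lemma by plugging the bound of Corollary~\ref{cor:symmetry_gap_cardinality} into Theorem~\ref{thm:symmetry_gap} and noting that sub-exponential time implies sub-exponentially many value queries. Your explicit check that $\max\{f'(S) \mid S \in \cF'\}$ is strongly symmetric under $\cG'$ (via $\xi \circ \sigma' = \sigma \circ \xi$ for condition~(1), and cardinality preservation for condition~(2)) fills in a hypothesis the paper only asserts, and it is sound.
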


The last lemma implies the result of Theorem~\ref{thm:hardness} for cardinality constraints by observing that every refinement of $\cF'$ is a cardinality constraint, setting $\eps' = \eps/2$ and choosing $\ell$ to be large enough function of $\eps$.

\appendix

\section{Determining the Parameters \texorpdfstring{$w_1$}{w1} and \texorpdfstring{$w_2$}{w2}} \label{app:determining_weights}

Corollary~\ref{cor:symmetry_gap} assigns values for $w_1$ and $w_2$. These values were chosen to minimize the symmetry gap guarantee. In this appendix, we explain how these values were found. Let us denote $\alpha \triangleq 1 - w_1 - w_2/2 \in (0, 1)$. Then, up to the $\ell^{-1}$ error term, the symmetry gap guaranteed by Corollary~\ref{cor:symmetry_gap} is
\begin{equation} \label{eq:symmetry_gap_simplified}
    \max\bigg\{\alpha, \frac{(1 - \alpha + w_2/2)^2}{1 - \alpha + 3w_2/2}\bigg\}
    \enspace.
\end{equation}
Notice that the derivative with respect to $w_2$ of the second argument of the maximum operation is
\begin{align*}
    &
    \frac{(1 - \alpha + w_2/2)(1 - \alpha + 3w_2/2) - (1 - \alpha + w_2/2)^2 \cdot 3/2}{(1 - \alpha + 3w_2/2)^2}\\
    ={} &
    \frac{(1 - \alpha + w_2/2)}{2(1 - \alpha + 3w_2/2)^2} \cdot [2(1 - \alpha + 3w_2/2) - 3(1 - \alpha + w_2/2)]
    =
    \frac{(1 - \alpha + w_2/2)}{2(1 - \alpha + 3w_2/2)^2} \cdot (-1 + \alpha + 3w_2/2)
    \enspace.
\end{align*}
Since $\alpha \in (0, 1)$ and $w_2 > 0$, the fraction on the rightmost side is always positive. Thus, the sign of the derivative is equal to the sign of $-1 + \alpha + 3w_2/2$, which implies that (for a fixed value of $\alpha$) the second argument of \eqref{eq:symmetry_gap_simplified} is minimized for $w_2 = 2(1 - \alpha)/3 > 0$. Notice that this is a valid choice for $w_2$ since it implies that
\[
    w_1
    =
    1 - w_2/2 - \alpha
    =
    2(1 - \alpha)/3
    >
    0
    \enspace.
\]

Plugging the above value of $w_2$ (as a function of $\alpha$) into \eqref{eq:symmetry_gap_simplified}, we get a symmetry gap of
\[
    \max\bigg\{\alpha, \frac{(1 - \alpha + (1 - \alpha)/3)^2}{1 - \alpha + (1 - \alpha)}\bigg\}
    =
    \max\bigg\{\alpha, \frac{(4/3)^2 \cdot (1 - \alpha)}{2} \bigg\}
    =
    \max\{\alpha, 8(1 - \alpha)/9\}
    \enspace.
\]
Since the first argument of the maximum is increasing in $\alpha$ and the second argument is decreasing in $\alpha$, the value of the maximum is minimized when the two arguments are equal, i.e., when $\alpha = 8(1 - \alpha)/9$. Solving this equation, we get $\alpha = \frac{8}{17} \in (0, 1)$; and for this value of $\alpha$,
\[
    w_1
    =
    w_2
    =
    \frac{2(1 - \alpha)}{3}
    =
    \frac{2(1 - 8/17)}{3}
    =
    \frac{2 \cdot 9/17}{3}
    =
    \frac{6}{17}
    \enspace.
\]

\section*{AI Usage Disclosure}

The symmetric instance underlying our hardness results (i.e., the instance described in Section~\ref{sec:symmetric_instance}) was found by GPT-5.6-Sol during an extended automated exploration run through OpenAI's Codex harness. Along the way, this exploration considered several routes to improving the endpoint of Oveis Gharan \& Vondr{\'{a}}k~\cite{oveisgharan2011submodular}, including reweightings and multi-arm generalizations of their directed-hyperedge gadget, the cardinality variant of Qi~\cite{qi2024maximizing}, modular penalty and truncation repairs, and linear programs over all symmetric submodular objectives on small candidate geometries; most of these were ruled out by exact obstructions. The instance that eventually worked emerged late in this exploration.
  The authors then wrote the presentation herein from scratch without AI assistance, have checked all mathematical arguments, and take full responsibility for the paper's content.

\bibliographystyle{plainurl}
\bibliography{bib}

\end{document}